\newtheorem{theorem}{Theorem}[section]
\newtheorem{proposition}[theorem]{Proposition}
\newtheorem{corollary}[theorem]{Corollary}
\newtheorem{lemma}[theorem]{Lemma}
\newtheorem{conjecture}[theorem]{Conjecture}
\title{Palindromic Length of Words with Many Periodic Palindromes}
\author{Josef Rukavicka\thanks{Department of Mathematics, Faculty of Nuclear Sciences and Physical Engineering, Czech Technical University in Prague, Czech Republic
(josef.rukavicka@seznam.cz).}}
\theoremstyle{remark}
\newtheorem{remark}[theorem]{Remark}
\DeclareMathOperator{\Factor}{Fac}
\DeclareMathOperator{\Pal}{Pal}
\DeclareMathOperator{\PalPrefix}{PalPrf}
\DeclareMathOperator{\PL}{PL}
\DeclareMathOperator{\MPF}{MPF} 
\DeclareMathOperator{\Prefix}{Prf}
\DeclareMathOperator{\Suffix}{Suf}
\DeclareMathOperator{\Alphabet}{A}
\DeclareMathOperator{\MinPer}{MinPer}
\DeclareMathOperator{\Period}{Period} 
\date{\small{May 03, 2020}\\
   \small Mathematics Subject Classification: 68R15}
\begin{document}
\maketitle

\begin{abstract}
The palindromic length $\PL(v)$ of a finite word $v$ is the minimal number of palindromes whose concatenation is equal to $v$. In 2013, Frid, Puzynina, and Zamboni conjectured that: If $w$ is an infinite word and $k$ is an integer such that $\PL(u)\leq k$ for every factor $u$ of $w$ then $w$ is ultimately periodic. 

Suppose that $w$ is an infinite word and $k$ is an integer such $\PL(u)\leq k$ for every factor $u$ of $w$. Let $\Omega(w,k)$ be the set of all factors $u$ of $w$ that have more than $\sqrt[k]{k^{-1}\vert u\vert}$ palindromic prefixes. We show that $\Omega(w,k)$ is an infinite set and we show that for each positive integer $j$ there are palindromes $a,b$ and a word $u\in \Omega(w,k)$ such that $(ab)^j$ is a factor of $u$ and $b$ is nonempty. Note that $(ab)^j$ is a periodic word and $(ab)^ia$ is a palindrome for each $i\leq j$.
These results justify the following question: What is the palindromic length of a concatenation of a suffix of $b$ and a periodic word $(ab)^j$ with ``many'' periodic palindromes?

It is known that $\lvert\PL(uv)-\PL(u)\rvert\leq \PL(v)$, where $u$ and $v$ are nonempty words. The main result of our article shows that if $a,b$ are palindromes, $b$ is nonempty, $u$ is a nonempty suffix of $b$, $\vert ab\vert$ is the minimal period of $aba$, and $j$ is a positive integer with $j\geq3\PL(u)$ then $\PL(u(ab)^j)-\PL(u)\geq 0$.
\end{abstract}

\section{Introduction}

In 2013, Frid, Puzynina, and Zamboni introduced a \emph{palindromic length} of a finite word \cite{FrPuZa}. Recall that the word $u=x_1x_2\dots x_n$ of length $n$ is called a \emph{palindrome} if $x_1x_2\dots x_n=x_n\dots x_2x_1$, where $x_i$ are letters and $i\in \{1,2,\dots,n\}$. The palindromic length $\PL(u)$ of the word $u$ is defined as the minimal number $k$ such that $u=u_1u_2\dots u_k$ and $u_j$ are palindromes, where $j\in \{1,2,\dots,k\}$; note that the palindromes $u_j$ are not necessarily distinct. Let $\epsilon$ denote the empty word. We define that $\PL(\epsilon)=0$. 

In general, the factorization of a finite word into the minimal number of palindromes is not unique; for example $\PL(011001)=3$ and the word $011001$ can be factorized in two ways: $011001=(0110)(0)(1)=(0)(1)(1001)$.

The authors of \cite{FrPuZa} conjectured that: 
\begin{conjecture}
\label{tue8eiru883}
If $w$ is an infinite word and $P$ is an integer such that $\PL(u)\leq P$ for every factor $u$ of $w$ then $w$ is ultimately periodic. 
\end{conjecture}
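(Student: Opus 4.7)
\textbf{Proof proposal for Conjecture~\ref{tue8eiru883}.}

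The plan is to argue by contradiction: assume $w$ is not ultimately periodic while every factor of $w$ has palindromic length at most $P$, and try to exhibit a factor whose palindromic length exceeds $P$.

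First, I would invoke the two structural results stated in the abstract. For each positive integer $j$ they produce palindromes $a,b$ (with $b$ nonempty) and a factor $v\in\Omega(w,P)$ such that $(ab)^j$ is a factor of $v$, and moreover $(ab)^i a$ is a palindrome for every $i\leq j$. By shifting the chosen occurrence of $(ab)^j$ one letter at a time into the interior of the preceding $b$, I would arrange that its left neighbourhood inside $w$ is a nonempty suffix $u$ of $b$; after passing to a primitive variant of the period, I may also assume that $|ab|$ is the minimal period of $aba$. This is precisely the hypothesis of the main theorem of the paper, so for $j\geq 3\PL(u)$ (which is automatic as soon as $j\geq 3P$) we obtain $\PL(u(ab)^j)\geq \PL(u)$.

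Second, I would iterate. Either the lengths $|ab|$ arising over $j$ stay bounded, in which case pigeonholing on the pair $(a,b)$ forces $w$ to contain arbitrarily large powers of a fixed primitive $ab$; then a Fine--Wilf style argument, combined with the hypothesis that $w$ is a single infinite word with uniformly bounded $\PL$ on its factors, should stitch these powers into a genuinely periodic right-infinite tail, contradicting the assumption that $w$ is not ultimately periodic. Or the periods $|ab|$ grow unboundedly, and then $\Omega(w,P)$ contains infinitely many nested periodic palindromic blocks of increasing primitive period, to which the main theorem can be applied repeatedly from the outside in, each application pinning down an optimal palindromic factorization of the outer word in terms of that of the inner word.

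The hard part, and the place where the proposal is not self-contained, is precisely upgrading the nonnegative bound $\PL(u(ab)^j)-\PL(u)\geq 0$ into an \emph{unbounded growth} statement as the nesting depth or exponent $j$ increases. The main theorem alone prevents palindromic length from dropping but does not force it to rise. Closing the gap presumably requires a strict version of the inequality in the situation where $u$ already contains a long periodic palindromic block of a different primitive period: one would have to show that any optimal palindromic factorization of the outer word cannot simultaneously respect both periods, and must therefore spend at least one additional palindrome per level of nesting, eventually pushing $\PL$ past $P$. This is where I expect the combinatorial core of any actual proof of the conjecture to lie, and my proposal at best reduces the conjecture to such a strict refinement of the main theorem.
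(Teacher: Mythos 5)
The statement you are trying to prove is Conjecture~\ref{tue8eiru883}, the Frid--Puzynina--Zamboni conjecture, which the paper explicitly states remains open (``So far, Conjecture~\ref{tue8eiru883} remains open''). The paper offers no proof of it; Propositions~\ref{uyzm3m2mzo} and~\ref{ry73u39udh} and the main theorem are presented only as partial progress. So there is no proof in the paper to compare yours against, and the real question is whether your proposal closes the conjecture on its own. It does not, and you essentially say so yourself in your final paragraph.

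The decisive gap is the one you name: the main theorem only yields $\PL(u(ab)^j)-\PL(u)\geq 0$, a non-decrease. Since the standing hypothesis already gives $\PL(t)\leq P$ for \emph{every} factor $t$, the inequality $\PL(u(ab)^j)\geq\PL(u)$ is perfectly consistent with that bound and produces no contradiction; nothing in the paper forces the palindromic length of any factor to grow past $P$, whether by iterating over exponents $j$ or by nesting periodic blocks of different periods. Two further steps are also not sound as stated. First, in the bounded-period branch, the fact that $(ab)^j$ occurs as a factor of $w$ for every $j$ does not imply $w$ is ultimately periodic (the occurrences may be at scattered positions with other material in between), so no Fine--Wilf argument can ``stitch'' them into a periodic tail without substantial extra input. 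Second, the reduction to the hypotheses of the main theorem --- arranging that the word immediately preceding the chosen occurrence of $(ab)^j$ is a nonempty suffix of $b$ and that $\vert ab\vert$ is the minimal period of $aba$ --- is asserted by ``shifting one letter at a time'' but not justified; Proposition~\ref{ry73u39udh} only gives $(ab)^j$ as a prefix of some $t\in\Omega(w,k)$ and says nothing about what precedes that occurrence in $w$. Your proposal is best read as a correct identification of where the difficulty lies (a strict, quantitative strengthening of the main theorem under nesting), not as a proof.
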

So far, Conjecture \ref{tue8eiru883} remains open.
We call an infinite word that satisfies the condition from Conjecture \ref{tue8eiru883} a word with a \emph{bounded palindromic length}. Note that there are infinite periodic words that do not have a bounded palindromic length; for example $(012)^{\infty}$. Hence the converse of Conjecture \ref{tue8eiru883} does not hold.

In \cite{FrPuZa} the conjecture was proved for infinite words that are $k$-power free for some positive integer $k$.  It follows that if $w$ is an infinite word with a bounded palindromic length, then for each positive integer $j$ there is a nonempty factor $r$ such that $r^j$ is a factor of $w$.  


In \cite{10.1007/978-3-319-66396-8_19}, another variation of Conjecture \ref{tue8eiru883} was considered:
\begin{conjecture}
\label{di88ejdui33}
Every aperiodic (not ultimately periodic) infinite word has prefixes of arbitrarily high palindromic length.
\end{conjecture}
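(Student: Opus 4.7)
The plan is to prove the following stronger auxiliary statement, from which the theorem follows: for palindromes $a,b$ with $b$ nonempty and $|ab|$ the minimal period of $aba$, every nonempty suffix $v$ of $b$, and every palindromic factorization $v(ab)^j = q_1 q_2 \cdots q_s$ with $j \geq 3s$, one has $\PL(v) \leq s$. If the theorem failed for $u$, so that $\PL(u(ab)^j) < \PL(u)$, then taking $s := \PL(u(ab)^j)$ we would have $j \geq 3\PL(u) > 3s$, and applying the auxiliary statement to $v = u$ with a minimal palindromic factorization would give $\PL(u) \leq s < \PL(u)$, a contradiction.

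I would prove the auxiliary statement by induction on $s$. The easy case is when $q_1$ lies entirely within $v$: then $v = q_1 v'$ with $v'$ still a suffix of $b$, the factorization $v'(ab)^j = q_2 \cdots q_s$ has $s-1$ pieces, and $j \geq 3s \geq 3(s-1)$, so the induction hypothesis gives $\PL(v') \leq s-1$, whence $\PL(v) \leq 1 + \PL(v') \leq s$.

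The substantive case is when $q_1 = vw$ straddles the boundary between $v$ and $(ab)^j$, with $w$ a nonempty prefix of $(ab)^j$. Here I would use two structural ingredients. First, since $|ab|$ is the minimal period of $(ab)^\infty$ and every palindrome of minimal period $d$ has the form $(xy)^\ell x$ with $|xy|=d$ and $x,y$ palindromes, every nonempty palindromic factor of $(ab)^j$ takes the form $(ab)^\ell a$ or $(ba)^\ell b$; this applies in particular to each $q_i$ with $i \geq 2$. Second, the palindrome equation $vw = w^R v^R$ together with the fact that $v^R$ is a prefix of $b$ (because $v$ is a suffix of the palindrome $b$) forces a rigid alignment of $v$ against the period $ab$. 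Combining these with the information that $q_2 \cdots q_s$ tiles a prescribed suffix of $(ab)^j$, I would construct a palindromic factorization of $v$ with at most $s$ pieces, possibly invoking the induction hypothesis on a shorter suffix of $b$ produced during the combinatorial analysis.

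The main obstacle is this straddling case. The three constraints --- $v$ a suffix of $b$, $w$ a prefix of $(ab)^j$, and $vw$ a palindrome --- interact through a case analysis based on whether $|v| \leq |w|$ or $|v| > |w|$ and on how $|v|$ compares to $|a|$, $|b|$, and $|ab|$. The coefficient $3$ in $j \geq 3\PL(u)$ is presumably tight for the argument: one unit of $(ab)^j$ is absorbed by the right half of the straddling palindrome $q_1$, one unit is needed to force the subsequent palindromes $q_i$ into their long periodic form $(ab)^\ell a$ or $(ba)^\ell b$, and the last unit is consumed in translating the tail factorization into a factorization of $v$.
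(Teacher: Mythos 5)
The statement you were given is Conjecture \ref{di88ejdui33}: every aperiodic infinite word has prefixes of arbitrarily high palindromic length. The paper does not prove this statement. It records it as an open conjecture, notes that it is equivalent to Conjecture \ref{tue8eiru883} by a result cited from the literature, and states explicitly that it therefore remains open. Your proposal never engages with this claim. What you sketch is an argument for the paper's \emph{main theorem}, the finite-word inequality $\PL(u(ab)^j)\geq\PL(u)$ under the hypotheses that $a,b$ are palindromes, $b\neq\epsilon$, $u$ is a nonempty suffix of $b$, $\vert ab\vert=\MinPer(aba)$, and $j\geq 3\PL(u)$. That inequality concerns one special family of finite words; it neither implies, nor is claimed in the paper to imply, anything about an arbitrary aperiodic infinite word having prefixes of unbounded palindromic length. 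So, as a proof of the stated conjecture, your proposal is a non sequitur: even a fully correct version of it would leave the conjecture exactly as open as before.

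Even read as an attempt at the main theorem, the sketch leaves the entire difficulty unresolved: the straddling case, where $q_1=vw$ with $w$ a nonempty prefix of $(ab)^j$, is deferred to an unspecified combinatorial analysis, and that case is where all the content lies. Moreover, one of your two structural ingredients is false as stated: it is not true that every nonempty palindromic factor of $(ab)^j$ has the form $(ab)^{\ell}a$ or $(ba)^{\ell}b$. Take $a=0$ and $b=11$, so that $\vert ab\vert=3=\MinPer(aba)$; then $1$ and $101$ are palindromic factors of $(ab)^2=011011$ of neither form. The periodic form is forced only on palindromic factors that are long relative to the period, which is precisely why the paper works with a block of length at least $3\vert vd\vert$ (Lemma \ref{11scnfk9p}) and hence imposes $n\geq 3\PL(u)$. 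For comparison, the paper's proof of its main theorem does not induct on the number of blocks: it fixes a minimal palindromic factorization of $u(vd)^n$, finds by pigeonhole a single long block $t_j$, shows via minimality of the period that $t_j$ must be centrally aligned, i.e.\ $t_j=pd(vd)^{\gamma}p^R$ (Lemma \ref{rhujdu333dhj}), and then concludes with a three-case analysis using the inequality $\lvert\PL(xy)-\PL(x)\rvert\leq\PL(y)$.
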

In \cite{10.1007/978-3-319-66396-8_19}, the author proved that Conjecture \ref{tue8eiru883} and Conjecture \ref{di88ejdui33} are equivalent. More precisely, it was proved that if every prefix of an infinite word $w$ is a concatenation of at most $n$ palindromes then every factor of $w$ is a concatenation of at most $2n$ palindromes. It follows that Conjecture \ref{di88ejdui33} remains also open.

In \cite{FRID2018202} Conjecture \ref{tue8eiru883} and Conjecture \ref{di88ejdui33} have been proved for all Sturmian words. The properties of the palindromic length of Sturmian words have been investigated also in \cite{10.1007/978-3-030-24886-4_18}. In \cite{AMBROZ201974}, the authors study the palindromic length of factors of fixed points of primitive morphisms. In \cite{10.1007/978-3-030-24886-4_17}, the lower bounds for the palindromic length of prefixes of infinite words can be found.

In \cite{BucMichGreedy2018}, a left and right greedy palindromic length have been introduced as a variant to the palindromic length. It is shown that if the left (or right) greedy palindromic lengths of prefixes of an infinite word $w$ is bounded, then $w$ is ultimately periodic.  

In addition, algorithms for computing the palindromic length were researched \cite{borozdin_et_al:LIPIcs:2017:7338}, \cite{FICI201441}, \cite{RuSh15}. In \cite{RuSh15}, the authors present a linear time online algorithm for computing the palindromic length.

In the current paper we investigate infinite words with a bounded palindromic length. Let $k$ be a positive integer, let $w$ be an infinite word such that $k\geq\PL(t)$ for every factor $t$ of $w$, and let $\Omega(w,k)$ be the set of all factors $u$ of $w$ that have more than $\sqrt[k]{k^{-1}\vert u\vert}$ palindromic prefixes. We show that $\Omega(w,k)$ is an infinite set and we show that for each positive integer $j$ there are palindromes $a,b$ and a word $u\in \Omega(w,k)$ such that $(ab)^j$ is a factor of $u$ and $b$ is nonempty. Note that $(ab)^j$ is a periodic word and $(ab)^ia$ is a palindrome for each $i\leq j$. In this sense we can consider that $w$ has infinitely many periodic palindromes with an arbitrarily high exponent $j$. 

The existence of infinitely many periodic palindromes in $w$ is not surprising. It can be deduced also from the result in \cite{FrPuZa}, which says, as mentioned above, that if $w$ is an infinite word with a bounded palindromic length, then for each positive integer $j$ there is a nonempty factor $r$ such that $r^j$ is a factor of $w$. 

These results justify the following question: What is the palindromic length of a concatenation of a suffix of $b$ and a periodic word $(ab)^j$ with ``many'' periodic palindromes?

It is known that if $u,v$ are nonempty words then $\lvert\PL(uv)-\PL(u)\rvert\leq \PL(v)$ \cite{10.1007/978-3-319-66396-8_19}. Less formally said, it means that by concatenating a word $v$ to a word $u$ the change of the palindromic length is at most equal to the palindromic length of $v$. The main result of our article shows that if $a,b$ are palindromes, $b$ is nonempty, $u$ is a nonempty suffix of $b$, $\vert ab\vert$ is the minimal period of $aba$, and $j$ is a positive integer with $j\geq3\PL(u)$ then $\PL(u(ab)^j)-\PL(u)\geq 0$. 

The results of our article should shed some light on infinite words for which Conjecture  \ref{tue8eiru883} and Conjecture \ref{di88ejdui33} remain open.

\section{Preliminaries}

Let $\mathbb{N}$ denote the set of all positive integers, let $\mathbb{N}_0=\mathbb{N}\cup\{0\}$ denote the set of all nonnegative integers, let $\mathbb{R}$ denote the set of all real numbers, and let $\mathbb{R}^+$ denote the set of all positive real numbers. 

Let $\Alphabet$ denote a finite alphabet with $\vert \Alphabet\vert\geq 2$ letters. Let $\Alphabet^+$ denote the set of all finite nonempty words over the alphabet $\Alphabet$ and let $\Alphabet^*=\Alphabet^+\cup \{\epsilon\}$; recall that $\epsilon$ denotes the empty word. Let $\Alphabet^{\mathbb{N}}$ denote the set of all right infinite words.

Let $n\in \mathbb{N}$ and let $w=w_1w_2\dots w_n\in \Alphabet^*$, where $w_i\in \Alphabet$ and $i\in\{1,2,\dots,n\}$. We denote by $w[i,j]=w_iw_{i+1}\dots w_j$ the factor of $w$ starting at position $i\in \mathbb{N}$ and ending at position $j\in\mathbb{N}$, where $i,j\in \mathbb{N}$ and $i\leq j\leq n$

Let $w=w_1w_2\dots \in \Alphabet^{\mathbb{N}}$, where $w_i\in \Alphabet$ and $i\in\{1,2,\dots\}$. We denote by $w[i,j]=w_iw_{i+1}\dots w_j$ the factor of $w$ starting at position $i\in \mathbb{N}$ and ending at position $j\in\mathbb{N}$, where $i,j\in \mathbb{N}$ and $i\leq j$.

We call the word $v\in \Alphabet^*$ a \emph{factor} of the word $w\in \Alphabet^*\cup\Alphabet^{\mathbb{N}}$ if there are words $a\in \Alphabet^*$ and $b\in \Alphabet^*\cup\Alphabet^{\mathbb{N}}$ such that $w=avb$. Given a word $w\in \Alphabet^*\cup\Alphabet^{\mathbb{N}}$, we denote by $\Factor(w)$ the set of all factors of $w$. It follows that $\epsilon\in \Factor(w)$ and if $w\in \Alphabet^*$ then also $w\in \Factor(w)$. 

We call the word $v\in \Alphabet^*$ a \emph{prefix} of the word $w\in \Alphabet^*\cup\Alphabet^{\mathbb{N}}$ if there is $t\in \Alphabet^*\cup\Alphabet^{\mathbb{N}}$ such that $w=vt$. Given a word $w\in \Alphabet^*\cup\Alphabet^{\mathbb{N}}$, we denote by $\Prefix(w)$ the set of all prefixes of $w$. It follows that $\epsilon\in \Prefix(w)$ and if $w\in \Alphabet^*$ then also $w\in \Prefix(w)$. 

We call the word $v\in \Alphabet^*$ a \emph{suffix} of the word $w\in \Alphabet^*$ if there is $t\in \Alphabet^*$ such that $w=tv$. Given a word $w\in \Alphabet^*$, we denote by $\Suffix(w)$ the set of all suffixes of $w$. It follows that $\epsilon, w\in \Suffix(w)$. 

Let $w=w_1w_2\dots w_n\in \Alphabet^+$, where $w_i\in \Alphabet$ and $i\in \{1,2,\dots, n\}$.
Let $w^R$ denote the \emph{reversal} of the word $w\in \Alphabet^+$; it means $w^R=w_nw_{n-1}\dots w_2w_1$. In addition we define that the reversal of the empty word is the empty word; formally $\epsilon^R=\epsilon$. 

Realize that $w\in \Alphabet^*$  is a \emph{palindrome} if and only if $w^R=w$. Let $\Pal\subset \Alphabet^*$ denote the set of all palindromes over the alphabet $\Alphabet$. We define that $\epsilon\in \Pal$. Let $\Pal^+=\Pal\setminus\{\epsilon\}$ be the set of all nonempty palindromes. 

Given $w\in \Alphabet^*\cup \Alphabet^{\mathbb{N}}$, let $\PalPrefix(w)=\Pal\cap\Prefix(w)$ be the set of all palindromic prefixes of $w$.

Given $w\in\Alphabet^+$, let $\MPF(w)$ denote the set of all $k$-tuples of palindromes whose concatenation is equal to $w$ and $k=\PL(w)$; formally
\[\begin{split}\MPF(w)=\{(t_1,t_2,\dots,t_k)\mid k=\PL(w)\mbox{ and }t_1t_2\dots t_k=w\\ \mbox{ and }t_1,t_2,\dots,t_k\in \Pal^+\}\mbox{.}\end{split}\]
We call a $k$-tuple $(t_1,t_2,\dots,t_k)\in \MPF(w)$ a \emph{minimal palindromic factorization} of $w$.

Let $\mathbb{Q}$ denote the set of all rational numbers. We say that the word $w\in \Alphabet^+$ is a \emph{periodic} word, if there are $\alpha\in \mathbb{Q}$, $r\in \Prefix(w)\setminus\{\epsilon\}$, and $\bar r\in \Prefix(r)\setminus\{r\}$ such that $\alpha>1$, $w=rr\dots r\bar r$, and $\frac{\vert w\vert}{\vert r\vert}=\alpha$; note that $\bar r$ is uniquely determined by $r$. We write $w=r^{\alpha}$ and the period of $w$ is equal to $\vert r\vert$. For example $12341=(1234)^{\frac{5}{4}}$ and $12341234123=(1234)^{\frac{11}{4}}$.

Given $w\in \Alphabet^+$, let \[\Period(w)=\{(r,\alpha)\mid r^{\alpha}=w\mbox{ and }r\in \Prefix(w)\setminus\{\epsilon\}\mbox{ and }\alpha\in \mathbb{Q}\mbox{ and }\alpha> 1\}\mbox{.}\]
The set $\Period(w)$ contains all couples $(r,\alpha)$ such that $r^{\alpha}=w$. Let \[\MinPer(w)=\min\{\vert r\vert\mid (r,\alpha)\in \Period(w)\}\in \mathbb{N}\mbox{.}\] The positive integer $\MinPer(w)$ is the \emph{minimal period} of the word $w$. The word $w\in \Alphabet^+$ has a period $\delta\in \mathbb{Q}$ if there is a couple $(r,\alpha)\in \Period(w)$ such that $\vert r\vert=\delta$.

We will deal a lot with periodic palindromes. The two following known lemmas will be useful for us.
\begin{lemma} (see \cite[Lemma 1]{10.1007/978-3-662-46078-8_24})
\label{tudjkdi8545}
Suppose $p$ is a period of a nonempty palindrome $w$; then there are
palindromes $a$ and $b$ such that $\vert ab\vert = p$, $b \not=\epsilon$, and $w = (ab)^*a$.
\end{lemma}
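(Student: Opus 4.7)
The plan is to construct $a$ and $b$ directly from $w$ and then verify the three required properties by elementary index bookkeeping. Set $n = \vert w\vert$ and write $n = kp + q$ with integers $k \geq 1$ and $0 \leq q < p$; the strict inequality $p < n$ (hence $k \geq 1$) follows from $\alpha > 1$ in the definition of period. Then take $a := w[1, q]$ (understood as $\epsilon$ when $q = 0$) and $b := w[q+1, p]$, so that automatically $\vert a\vert + \vert b\vert = p$ and $\vert b\vert = p - q \geq 1$.

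The identity $w = (ab)^k a$ is immediate from the period: the first $p$ letters of $w$ are, by construction, exactly $ab$, and since $w_i = w_{i+p}$ holds throughout, $w$ coincides with the length-$n$ prefix of the purely periodic word $(ab)(ab)(ab)\cdots$. Parsing this prefix into blocks of length $p$ gives $k$ complete copies of $ab$ followed by the length-$q$ prefix of $ab$, which is exactly $a$.

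To check that $a$ and $b$ are palindromes, I combine the two symmetries of $w$. For $a$, the length-$q$ suffix of $w$ equals $a^R$ by the palindrome property $w_i = w_{n+1-i}$, but it also equals $w[1,q] = a$ by iterating the period identity $w_{kp+i} = w_i$ for $i \in \{1,\ldots,q\}$; hence $a = a^R$. For $b$, fix $j \in \{1, \ldots, p-q\}$ and observe that the palindrome property gives $w_{q+j} = w_{n+1-q-j} = w_{kp + 1 - j}$, while applying the period identity $k-1$ times reduces the right-hand index down to $p + 1 - j$, yielding the required symmetry $w_{q+j} = w_{p+1-j}$.

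I do not anticipate a substantive obstacle; the entire argument is a careful combination of the palindromic and periodic symmetries of $w$, driven by the decomposition $n = kp + q$. The only mild subtlety is keeping the index arithmetic straight and handling the degenerate case $q = 0$ cleanly, where $a = \epsilon$ and the formula $w = (ab)^k a$ collapses to $w = b^k$ with $b$ a palindromic root of length $p$ (and automatically $k \geq 2$, since $p < n$).
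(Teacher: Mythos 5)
The paper does not prove this lemma at all: it is imported verbatim as Lemma 1 of the cited reference, so there is no in-paper argument to compare against. Your proof is correct and self-contained, and it follows the standard route: writing $\vert w\vert = kp+q$, taking $a=w[1,q]$ and $b=w[q+1,p]$, and playing the two symmetries $w_i=w_{\vert w\vert+1-i}$ and $w_i=w_{i+p}$ against each other to get $a=a^R$ and $b=b^R$. The index checks all go through (in particular $k\geq 1$ follows from the paper's convention that a period satisfies $p<\vert w\vert$, and your reduction $w_{kp+1-j}=w_{p+1-j}$ uses the period identity only at indices in the valid range $[1,\vert w\vert-p]$), and the degenerate case $q=0$ is handled consistently with the paper's convention $\epsilon\in\mathrm{Pal}$.
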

\begin{lemma} (see \cite[Lemma 2]{10.1007/978-3-662-46078-8_24})
\label{id8ieubmzmfj}
Suppose $w$ is a palindrome and $u$ is its proper suffix-palindrome or
prefix-palindrome; then the number $\vert w\vert-\vert u\vert$ is a period of $w$.
\end{lemma}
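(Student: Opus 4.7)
The plan is to exploit the palindromic structure of both $w$ and $u$ to force $u$ to occur simultaneously as a prefix and as a suffix of $w$, and then to read off the period $|w|-|u|$ directly from that double occurrence.

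First I would reduce the two cases to one. If $u$ is a proper palindromic suffix of $w$, write $w = tu$ with $t\in\Alphabet^+$ and reverse: $w = w^R = u^R t^R = u t^R$, which simultaneously exhibits $u$ as a proper prefix of $w$. So without loss of generality I may assume $u$ is a proper palindromic prefix and write $w = uv$ with $v\in\Alphabet^+$. Reversing again, $uv = w = w^R = v^R u$, which shows that $u$ is a suffix of $w$ as well. Setting $p = |w| - |u| = |v|$, for each index $i \in \{1,\dots,|u|\}$ the prefix occurrence of $u$ gives $w[i,i] = u[i,i]$ while the suffix occurrence gives $w[i+p,i+p] = w[(|w|-|u|)+i,(|w|-|u|)+i] = u[i,i]$; hence $w[i,i] = w[i+p,i+p]$ for every such $i$. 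This index calculation is uniform regardless of whether the two copies of $u$ overlap (i.e.\ whether $|u|\ge p$ or $|u|<p$), so no case split on overlap is needed.

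Finally I would translate this classical period $p$ into the paper's $\Period$ notation. Let $r$ be the prefix of $w$ of length $p$; iterated application of $w[i,i] = w[i+p,i+p]$ shows that $w$ is a concatenation of consecutive copies of $r$ followed by the prefix of $r$ of length $|w|\bmod p$, so $w = r^{|w|/p}$ with $|w|/p > 1$ (strict since $|u|\ge 1$). Therefore $(r,|w|/p)\in\Period(w)$, and $p=|w|-|u|$ is a period of $w$ in the sense of the paper. The only real obstacle is this last definitional translation, which is a routine unpacking of the $r^{\alpha}$ notation; the genuine content of the lemma is the identity $uv = v^R u$, which forces the double occurrence of $u$ in $w$.
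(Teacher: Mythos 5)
Your proof is correct; note that the paper states this lemma without proof, importing it verbatim from the cited reference, so there is no in-paper argument to compare against, and your argument is the standard one for this classical fact. The key identity $w=uv=v^Ru$ forcing the double occurrence of $u$ at distance $p=|w|-|u|$ is exactly right, the index computation is sound whether or not the two occurrences overlap, and the translation into the paper's $r^{\alpha}$ notation is a routine unpacking (with the one implicit caveat, which you flag, that $u$ must be nonempty so that $\alpha=|w|/p>1$ as the paper's definition of $\Period(w)$ requires).
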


\section{Periodic palindromic factors}

We start the section with a definition of a set of real non-decreasing functions that diverge as $n$ tends towards the infinity.

Let $\Lambda$ denote the set of all functions $\phi(n)$ such that 
\begin{itemize} 
\item $\phi(n):\mathbb{N}\rightarrow\mathbb{R}$, 
\item $\phi(n)\leq \phi(n+1)$, and \item $\lim_{n\rightarrow\infty} \phi(n)=\infty$. \end{itemize}

Let $k\in \mathbb{N}$, let $\tau(n,k)=\sqrt[k]{k^{-1}n}\in \Lambda$, let $w\in \Alphabet^{\mathbb{N}}$, and let \[\Omega(w,k)=\{t\in \Factor(w)\mid \vert\PalPrefix(t)\vert\geq\tau(\vert t\vert,k)\}\mbox{.}\] The definition says that the set $\Omega(w,k)$ contains a factor $t$ of $w$ if the number of palindromic prefixes of $t$ is bigger than or equal to $\tau(\vert t\vert,k)=\sqrt[k]{k^{-1}\vert t\vert}$. 

The next proposition asserts that if $w$ is an infinite word with a bounded palindromic length, then the set of factors that have more than $\tau(n,k)$ palindromic prefixes is infinite, where $n$ is the length of the factor in question and $k\geq \PL(t)$ for each factor $t$ of $w$.

\begin{proposition}
\label{uyzm3m2mzo}
If $w\in \Alphabet^{\mathbb{N}}$, $k\in \mathbb{N}$ and $k\geq\max\{\PL(t)\mid t\in \Factor(w)\}$  then $\vert \Omega(w,k)\vert=\infty$.
\end{proposition}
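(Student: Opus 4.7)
The plan is to prove the following sharper quantitative claim by induction on $L\in\mathbb{N}$: for every positive integer $L$ there exists a factor $t$ of $w$ with $|\PalPrefix(t)|\geq L$ and $|t|\leq kL^k$. Any such $t$ automatically satisfies $|\PalPrefix(t)|\geq L\geq(|t|/k)^{1/k}=\tau(|t|,k)$, hence belongs to $\Omega(w,k)$. Letting $L\to\infty$, the numbers $|\PalPrefix(t)|$ grow without bound, so the witnessing factors (each finite) must be pairwise distinct for arbitrarily large $L$, giving $|\Omega(w,k)|=\infty$.

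The base case $L=1$ is immediate: any one-letter factor of $w$ has itself as a nonempty palindromic prefix, and $1\leq k$. For the inductive step from $L$ to $L+1$, starting from a witness $t_L=w[i,j]$, I plan to enlarge the window to $u:=w[i,i+k(L+1)^k-1]$, which is a factor of $w$ of length $k(L+1)^k$. Using the hypothesis $\PL(u)\leq k$, I would write $u$ as a concatenation of at most $k$ palindromes; by pigeonhole one of those palindromic pieces, call it $Q$, has length at least $(L+1)^k$.

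The core of the argument is to convert this long palindrome $Q$ into an $(L+1)$-st palindromic prefix of some suitably chosen factor within the window. I anticipate Lemma \ref{tudjkdi8545} being crucial: if $Q$ has period $p$ small relative to its length, then $Q=(ab)^{*}a$ for palindromes $a,b$ with $|ab|=p$, and the $(ab)^{i}a$ structure supplies many palindromic prefixes automatically, often already reaching the threshold $L+1$. In the complementary regime where the periods involved are large, Lemma \ref{id8ieubmzmfj} lets one interpret palindromic borders of $Q$ as periods of $Q$, which should transfer the palindromic structure of $Q$ back to the starting position $i$ of the window (or to a shifted position at the start of $Q$) and produce there the required additional palindromic prefix on top of the $L$ prefixes inherited from $t_L$.

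The main obstacle I foresee is matching the tight length budget: the incremental length $k(L+1)^k-kL^k$ is only of order $k^{2}L^{k-1}$, so there is very little room in which the new palindromic prefix must appear. A careful quantitative use of Lemmas \ref{tudjkdi8545} and \ref{id8ieubmzmfj}, combined with the $\PL\leq k$ decomposition, will be required to guarantee that the pigeonhole palindrome $Q$ always yields a genuine extension of the palindromic-prefix chain rather than merely recovering palindromic prefixes already counted at step $L$.
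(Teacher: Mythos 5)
There is a genuine gap: your inductive step $L\to L+1$ is never actually carried out, and the mechanism you sketch for it does not work. From $\PL(u)\leq k$ and pigeonhole you obtain one palindromic piece $Q$ of length at least $(L+1)^{k}$ somewhere inside the window $u=w[i,i+k(L+1)^k-1]$, but a single long palindrome carries essentially no palindromic-prefix information by itself: $Q$ could equal $01^{m}0$, which has only two palindromic prefixes, and Lemma \ref{tudjkdi8545} gives you many palindromic prefixes of $Q$ only when $Q$ has a short period, which nothing in your setup guarantees. Worse, $Q$ need not start at position $i$, so even if $Q$ were highly structured it would not extend the chain of $L$ palindromic prefixes you inherit from $t_L=w[i,j]$; and if you instead anchor at the start of $Q$, you inherit nothing and the induction collapses. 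Since every palindrome in the factorization of $u$ is in play, not just one, a one-shot pigeonhole cannot force a new palindrome to begin at the fixed position $i$.

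Your target quantitative statement is in fact correct --- it is essentially what the paper proves --- but the paper reaches it by a global counting argument rather than a position-anchored induction. Assuming $\Omega(w,k)$ finite, it picks a prefix $p$ of $w$ with $\tau(\vert p\vert,k)$ exceeding the maximal number of palindromic prefixes of any member of $\Omega(w,k)$, deduces that \emph{every} factor of $p$ has fewer than $\tau(\vert p\vert,k)$ palindromic prefixes, and then counts: the number of $j$-tuples of nonempty palindromes concatenating to a prefix of $p$ is less than $\bigl(\tau(\vert p\vert,k)\bigr)^{j}$, so summing over $j\leq k$ gives fewer than $k\bigl(\tau(\vert p\vert,k)\bigr)^{k}=\vert p\vert$ such tuples, while each of the $\vert p\vert$ nonempty prefixes of $p$ contributes at least one minimal palindromic factorization. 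The essential point your plan misses is this aggregation over all $\vert p\vert$ prefixes simultaneously: the existence of a factor with many palindromic prefixes emerges only from the product bound over all $k$ coordinates of all these factorizations, not from a single palindromic piece of a single factorization. To repair your argument you would have to replace the inductive step by exactly this counting, at which point you have reproduced the paper's proof.
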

\begin{proof}
Suppose that $\vert \Omega(w,k)\vert<\infty$ and let \[K=\max\{\vert\PalPrefix(t)\vert\mid t\in\Omega(w,k)\}\mbox{.}\] Less formally said, the value $K$ is the maximal value from the set of numbers of palindromic prefixes of factors $t$ of $w$ that have more than $\tau(\vert t\vert,k)$ palindromic prefixes. Clearly $K<\infty$, because of the assumption $\vert \Omega(w,k)\vert<\infty$. 

Let $p\in \Prefix(w)$ be the shortest prefix of $w$ such that $\tau(\vert p\vert,k)>K$.  Since $\lim_{n\rightarrow\infty}\tau(n,k)=\infty$, it is clear that such prefix $p$ exists.

To get a contradiction suppose that $\vert\PalPrefix(t)\vert\geq \tau(\vert p\vert,k)$ for some $t\in \Factor(p)$. Since $\tau(\vert t\vert,k)\leq \tau(\vert p\vert,k)$ and thus $\vert\PalPrefix(t)\vert\geq \tau(\vert t\vert,k)$, it follows that $t\in \Omega(w,k)$ and consequently $\vert \PalPrefix(t)\vert\leq K$. It is a contradiction, because $K<\tau(\vert p\vert,k)$. Hence we have that \begin{equation}\label{5hhncb2b2}\vert\PalPrefix(t)\vert< \tau(\vert p\vert,k)\mbox{ for each }t\in \Factor(p)\mbox{.}\end{equation}

Let $n,j\in\mathbb{N}$ and let \[\begin{split}\Theta(n,j)=\{(v_1,v_2,\dots, v_j)\mid v_i\in \Pal^+\mbox{ and }i\in\{1,2,\dots,j\}\mbox{ and }\\ \vert v_1v_2\dots v_j\vert\leq n\mbox{ and }v_1v_2\dots v_j\in \Prefix(w)\}\mbox{.}\end{split}\] 
The set $\Theta(n,j)$ contains $j$-tuples of nonempty palindromes whose concatenation is of length lower than or equal to $n$ and also the concatenation is a prefix of $w$.

Thus from (\ref{5hhncb2b2}) we get that \begin{equation}\label{tu884i48xbcv2}\vert \Theta(\vert p\vert,j)\vert< (\tau(\vert p\vert,k))^j\mbox{.}\end{equation} 
The equation (\ref{tu884i48xbcv2}) follows from the fact that each factor of $p$ has at most $\tau(\vert p\vert,k)$ palindromic prefixes. In consequence there are at most $(\tau(\vert p\vert,k))^j$  of $j$-tuples of palindromes.

Let $\bar \Theta(\vert p\vert,j)=\bigcup_{j>0}^k\Theta(\vert p\vert, j)$. Since $\tau(n,k)\leq\tau(n+1,k)$ we have from (\ref{tu884i48xbcv2}) that \begin{equation}\label{iu8837911jk}\vert\bar\Theta(\vert p\vert,k)\vert\leq k\vert \Theta(\vert p\vert,k)\vert< k(\tau(\vert p\vert,k))^k\leq k\left(\sqrt[k]{k^{-1}\vert p\vert}\right)^k=\vert p\vert\mbox{.}\end{equation}
The inequality (\ref{iu8837911jk}) says that the number of prefixes of $p$ having the form $v_1v_2\dots v_j$, where $j\leq k$ and $v_i\in \Pal^+$ is lower than the length of $p$. But $p$ has $\vert p\vert$ nonempty prefixes. It is a contradiction. Since $\bigcup_{r\in \Prefix(p)}\MPF(r)\subseteq\bar \Theta(\vert p\vert, k)$ we conclude that $\Omega(w,k)$ is an infinite set.
\end{proof}
\begin{remark}
In the proof of Proposition \ref{uyzm3m2mzo}, we used the idea that the number of prefixes of a word of length $n$ that are a concatenation of at most $k$ palindromes is lower than $n$. This idea  was used also in Theorem $1$ in \cite{FrPuZa}.
\end{remark}

We show that if $\Sigma$ is an infinite set of words $r$ such that the number of nonempty palindromic prefixes of $r$ grows more than $\ln{\vert r\vert}$ as $\vert r\vert$ tends towards infinity  then for each positive integer $j$ there are palindromes $a,b$ and a word $t\in \Sigma$ such that $(ab)^j$ is a prefix of $t$ and $b$ is nonempty. Realize that $(ab)^ja$ is a palindrome for each $j\in \mathbb{N}_0$. This means that $\Sigma$ contains infinitely many words that have a periodic palindromic prefix of arbitrarily high exponent $j$. 

\begin{proposition}
\label{ry73u39udh}
If $\Sigma\subseteq\Alphabet^*$, $\vert \Sigma\vert=\infty$, $\phi(n)\in \Lambda$, $\lim_{n\rightarrow\infty}\left(\phi(n)-\ln{n}\right)=\infty$, and $\vert \PalPrefix(t)\setminus\{\epsilon\}\vert\geq \phi(\vert t\vert)$ for each $t\in\Sigma$ then for each $j\in \mathbb{N}$ there are palindromes $a\in \Pal$, $b\in \Pal^+$ and a word $t\in \Sigma$ such that $(ab)^j\in \Prefix(t)$.
\end{proposition}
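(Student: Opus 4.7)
The plan is a proof by contradiction, using Lemmas \ref{id8ieubmzmfj} and \ref{tudjkdi8545} to convert the existence of palindromic prefixes into a geometric-growth constraint on their lengths.

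Fix $j \in \mathbb{N}$ and suppose for contradiction that for every $t \in \Sigma$ and every $a \in \Pal$, $b \in \Pal^+$, $(ab)^j \notin \Prefix(t)$. The first step converts this hypothesis into a multiplicative gap bound between consecutive palindromic prefixes. Pick $t \in \Sigma$ and a nonempty palindromic prefix $p$ of $t$, and let $q$ be the longest proper palindromic prefix of $p$. By Lemma \ref{id8ieubmzmfj} the number $|p|-|q|$ is a period of $p$, and Lemma \ref{tudjkdi8545} produces palindromes $a, b$ with $b \neq \epsilon$, $|ab| = |p|-|q|$, and a factorization $p = (ab)^k a$ with $k \geq 1$. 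The word $(ab)^j$ is a prefix of $p$ (hence of $t$) precisely when $j|ab| \leq |p|$, i.e., $|q|/|p| \geq (j-1)/j$; our standing assumption rules this out, giving $|q|/|p| < (j-1)/j$.

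The second step applies this to the chain $p_1 < p_2 < \cdots < p_m$ of nonempty palindromic prefixes of $t$, where $m \geq \phi(|t|)$ and each $p_{i-1}$ is the longest proper palindromic prefix of $p_i$. We obtain $|p_i|/|p_{i-1}| > j/(j-1)$ for $i = 2, \ldots, m$, hence $|p_m| > \bigl(j/(j-1)\bigr)^{m-1}$. Combined with $|p_m| \leq |t|$ and $m \geq \phi(|t|)$, this rearranges to
\[
\phi(|t|) \;<\; 1 + \frac{\ln |t|}{\ln\bigl(j/(j-1)\bigr)} \qquad \text{for every } t \in \Sigma.
\]

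The third step contradicts the displayed bound: since $|\Sigma| = \infty$ over a finite alphabet, $|t|$ is unbounded on $\Sigma$, and the hypothesis $\phi(n)-\ln n\to\infty$ is invoked to force a violation. I expect this reconciliation to be the main obstacle, since $1/\ln\bigl(j/(j-1)\bigr)>1$ for $j\geq 2$, so matching the bound against $\phi(n)-\ln n\to\infty$ is not automatic. A likely route is to sharpen the multiplicative bound --- either by iterating Lemma \ref{id8ieubmzmfj} together with a Fine--Wilf argument on the periods of $p_m$ induced by its many palindromic prefixes, so as to extract a period of $p_m$ much smaller than the minimal consecutive gap $|p_i|-|p_{i-1}|$, or by applying the ratio bound to a carefully chosen intermediate palindromic prefix $p_i$ of $t$ (using $\phi(|p_i|)\leq i$ via the monotonicity of $\phi$) in a length regime where the hypothesis does rule out the inequality.
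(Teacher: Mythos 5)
Your first two steps reproduce, up to taking the contrapositive, exactly the paper's argument: the paper also combines Lemma \ref{id8ieubmzmfj} and Lemma \ref{tudjkdi8545} to turn each pair of consecutive palindromic prefixes into a period, notes that a consecutive length ratio at most $1+\frac{1}{j}$ forces a prefix $(ab)^k$ with $k\geq j$, and telescopes the ratios against $\vert t\vert$ to rule out a uniform lower bound $\alpha>1$ on all such ratios. The obstacle you flag in your third step is genuine, and --- importantly --- it is present in the paper's own proof as well. After deriving $\alpha^{h_t-1}\leq\vert t\vert$, the paper sets $c=\frac{1}{\ln\alpha}$ and asserts that $\alpha^{\phi(\vert t\vert)-1-c\ln\vert t\vert}>1$ for large $\vert t\vert$ ``because $\phi(n)-\ln n\to\infty$''; that implication is valid only when $c\leq 1$, i.e.\ $\alpha\geq e$, whereas the argument must be run with $\alpha=1+\frac{1}{j}$, arbitrarily close to $1$, for which $c$ is large. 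This $c$ is precisely the constant $\frac{1}{\ln(j/(j-1))}>1$ you isolate, so you have in effect located a gap in the published proof rather than merely in your own.

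Neither of your two proposed repairs can close this gap under the stated hypothesis, because the proposition as written appears to be false. The nonempty palindromic prefixes of the Fibonacci word have lengths $F_m-2$, so its prefix of length $n$ has roughly $\frac{\ln n}{\ln\varphi}\approx 2.08\ln n$ of them; taking $\Sigma$ to be the set of all sufficiently long prefixes of the Fibonacci word and $\phi(n)=2\ln n$ satisfies every hypothesis, including $\phi(n)-\ln n=\ln n\to\infty$, yet the Fibonacci word has critical exponent $2+\varphi<4$ and hence contains no fourth power, so no $(ab)^4$ with $b\neq\epsilon$ is a prefix of any $t\in\Sigma$. The correct repair is to strengthen the hypothesis to $\phi(n)-c\ln n\to\infty$ for every $c>0$ (equivalently $\phi(n)/\ln n\to\infty$); under that hypothesis your bound $\phi(\vert t\vert)<1+\frac{\ln\vert t\vert}{\ln(j/(j-1))}$ is contradicted for large $\vert t\vert$ and your proof closes as written. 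Nothing downstream is lost, since the only $\phi$ the paper ever applies the proposition to is $\tau(n,k)=\sqrt[k]{k^{-1}n}$, which grows polynomially and satisfies the stronger condition.
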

\begin{proof}
Given $t\in \Sigma$, let $\mu(t,i)$ be the lengths of all palindromic prefixes of $t$ such that $\mu(t,1)=1$ (a letter is a palindrome) and $\mu(t,i)<\mu(t,i+1)$, where $i\in \{1,2,\dots,h_t\}$ and $h_t=\vert \PalPrefix(t)\setminus\{\epsilon\}\vert$. The integer $h_t$ is the number of nonempty palindromic prefixes of $t$.
Let $i\in\{1,2,\dots,h_t-1\}$. It is clear that \begin{equation}\label{ty7cb2v39b}\mu(t,i+1)=\mu(t,i)\frac{\mu(t,i+1)}{\mu(t,i)}\mbox{.}\end{equation}
From (\ref{ty7cb2v39b}) we have that 
\begin{equation}\label{vbn22bxvcn0}\frac{\mu(t,h_t)}{\mu(t,h_t-1)}\frac{\mu(t,h_t-1)}{\mu(t,h_t-2)}\frac{\mu(t,h_t-2)}{\mu(t,h_t-3)}\cdots \frac{\mu(t,2)}{\mu(t,1)}=\mu(t,h_t)\leq \vert t\vert\mbox{.}\end{equation}
Suppose that there is $\alpha\in \mathbb{R}$ such that  $\alpha>1$ and for each $t\in \Sigma$ and for each $i\in \{1,2,\dots,h_t-1\}$ we have that $\frac{\mu(t,i+1)}{\mu(t,i)}\geq \alpha$. It follows from (\ref{vbn22bxvcn0}) that \begin{equation}\label{rtvv265v}\alpha^{h_t-1}\leq h_t\leq \vert t\vert\mbox{.}\end{equation} 
Let $c=\frac{1}{\ln{\alpha}}\in \mathbb{R}^+$. Then $\vert t\vert=\alpha^{c\ln{\vert t\vert}}$. Since $h_t\geq \phi(\vert t\vert)$ we get that \begin{equation}\label{yru77bv46}\frac{\alpha^{h_t-1}}{\vert t\vert}\geq\frac{\alpha^{\phi(\vert t\vert)-1}}{\vert t\vert}=\frac{\alpha^{\phi(\vert t\vert)-1}}{\alpha^{c\ln{\vert t\vert}}}=\alpha^{\phi(\vert t\vert)-1-c\ln{\vert t\vert}}\mbox{.}\end{equation}
Because $\lim_{n\rightarrow\infty}\left(\phi(n)-\ln{n}\right)=\infty$ the equation (\ref{yru77bv46}) implies that there is $n_0$ such that for each $t\in \Sigma$ with $\vert t\vert>n_0$ we have that \begin{equation}\label{uub2vx00l9}\frac{\alpha^{h_t-1}}{\vert t\vert}\geq\alpha^{\phi(\vert t\vert)-1-c\ln{\vert t\vert}}>1\mbox{.}\end{equation}
From (\ref{rtvv265v}) and (\ref{uub2vx00l9}) we have that $\alpha^{h_t-1}\leq \vert t\vert$ and $\frac{\alpha^{h_t-1}}{\vert t\vert}>1$, which is a contradiction. We conclude there is no such $\alpha$. In consequence, for each $\beta\in \mathbb{R}^+$ with $\beta>1$ there is $t\in \Sigma$ and $i\in \{1,2,\dots,h_t-1\}$ such that $\frac{\mu(t,i+1)}{\mu(t,i)}\leq \beta$.

Let $j\in \mathbb{N}$, let \begin{equation}\label{tri8iu44nz}\gamma\leq\frac{1}{j}+1\in\mathbb{R}^+\mbox{,}\end{equation} let  $t\in \Sigma$, and $i\in \{1,2,\dots,h_t\}$ be such that $\frac{\mu(t,i+1)}{\mu(t,i)}\leq \gamma$. Let $\delta=\frac{\mu(t,i+1)}{\mu(t,i)}\leq \gamma$. Let $u,v\in \Prefix(t)$ be such that $\vert u\vert=\mu(t,i)$ and $\vert v\vert=\mu(t,i+1)$. Then $v$ is a periodic palindrome with a period $\vert v\vert -\vert u\vert=\mu(t,i+1)-\mu(t,i)=\mu(t,i)\delta-\mu(t,i)=\mu(t,i)(\delta -1)$; see Lemma \ref{id8ieubmzmfj}. 
Lemma \ref{tudjkdi8545} implies that there are $a\in\Pal$ and $b\in\Pal^+$ such that $(ab)^{k}a=v$ for some $k\in \mathbb{N}$. From Lemma \ref{tudjkdi8545} we have also that $\vert ab\vert$ is the period of $v$. Thus \begin{equation}\label{8tut8gugbb}\vert ab\vert=\mu(t,i)(\delta-1)\leq \mu(t,i)(\gamma-1)\mbox{.}\end{equation}
From (\ref{tri8iu44nz}) and (\ref{8tut8gugbb}) it follows that
\begin{equation}\label{uejehbn3}\vert ab\vert\leq \mu(t,i)(\gamma-1)\leq \mu(t,i)\frac{1}{j}\mbox{.}\end{equation}
Note that $v=(ab)^ka$ and $u\in \Prefix((ab)^k)$. Since $\mu(t,i)=\vert u\vert$ we get that $\frac{\mu(t,i)}{\vert ab\vert}\leq k$. From (\ref{uejehbn3}) we have that 
\[j\leq \frac{\mu(t,i)}{\vert ab\vert}\leq k\mbox{.}\]
Thus for arbitrary $j\in \mathbb{N}$ we found $t,a,b,k$ such that $(ab)^k\in \Prefix(t)$ and $j\leq k$.
The proposition follows.
\end{proof}

A corollary of Proposition \ref{uyzm3m2mzo} and Proposition \ref{ry73u39udh} says that if $w$ is an infinite word with a bounded palindromic length then for each positive integer $j$ there are palindromes $a,b$ such that $(ab)^j$ is a factor of $w$ and $ab$ is a nonempty word.
\begin{corollary}
If $w\in \Alphabet^{\mathbb{N}}$, $k\in \mathbb{N}$, and $k\geq \max\{\PL(t)\mid t\in \Factor(w)\}$ then for each $j\in \mathbb{N}$ there are $a\in \Pal$ and $b\in \Pal^+$ such that $(ab)^j\in  \Factor(w)$.  
\end{corollary}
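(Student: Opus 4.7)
The plan is to simply combine the two propositions just proved. First, I would invoke Proposition \ref{uyzm3m2mzo} with the given $w$ and $k$ to conclude that $\Omega(w,k)$ is an infinite set of factors of $w$, each satisfying $\vert\PalPrefix(t)\vert \geq \tau(\vert t\vert,k) = \sqrt[k]{k^{-1}\vert t\vert}$.

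Next, I would prepare $\Omega(w,k)$ for use as the set $\Sigma$ in Proposition \ref{ry73u39udh}. Setting $\phi(n) = \tau(n,k) - 1 = \sqrt[k]{k^{-1}n} - 1$, I would verify the required properties: $\phi$ maps $\mathbb{N}$ to $\mathbb{R}$, is non-decreasing, and tends to infinity, so $\phi \in \Lambda$. Since the empty word is always a palindromic prefix, $\vert\PalPrefix(t)\setminus\{\epsilon\}\vert = \vert\PalPrefix(t)\vert - 1 \geq \phi(\vert t\vert)$ for every $t \in \Omega(w,k)$. The growth condition $\lim_{n\to\infty}(\phi(n) - \ln n) = \infty$ is immediate because a $k$-th root of $n$ eventually dominates $\ln n$ for any fixed $k$.

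With the hypotheses of Proposition \ref{ry73u39udh} verified for $\Sigma = \Omega(w,k)$, I would fix an arbitrary $j \in \mathbb{N}$ and apply that proposition to obtain $a \in \Pal$, $b \in \Pal^+$, and $t \in \Omega(w,k)$ with $(ab)^j \in \Prefix(t)$. Since $t \in \Factor(w)$ and every prefix of a factor of $w$ is itself a factor of $w$, we conclude $(ab)^j \in \Factor(w)$, completing the proof.

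I do not expect a substantive obstacle here; the whole argument is a bookkeeping step that fits the hypotheses of Proposition \ref{ry73u39udh} to the conclusion of Proposition \ref{uyzm3m2mzo}. The only mild point to double-check is that $\Omega(w,k)$ is infinite as a \emph{set} (which is exactly what Proposition \ref{uyzm3m2mzo} guarantees), so that it contains factors of arbitrarily large length and the asymptotic hypothesis on $\phi$ is genuinely exploitable.
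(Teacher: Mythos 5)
Your proof is correct and takes essentially the same route as the paper: set $\Sigma=\Omega(w,k)$, note the growth condition against $\ln n$, and apply Proposition \ref{ry73u39udh} via Proposition \ref{uyzm3m2mzo}. You are in fact a bit more careful than the paper, which silently uses $\phi=\tau(\cdot,k)$ even though $\PalPrefix(t)$ counts the empty word; your shift to $\phi(n)=\tau(n,k)-1$ fixes that harmless off-by-one.
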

\begin{proof}
Just take $\Sigma=\Omega(w,k)$. Obviously $\lim_{n\rightarrow \infty}\left(\tau(n,k)-\ln{n}\right)=\infty$. Then Proposition \ref{ry73u39udh} implies the corollary. 
\end{proof}

\section{Palindromic length of concatenation}
In this section we present some known results about the palindromic length of concatenation of two words.

The first lemma shows the very basic property of the palindromic length that the palindromic length of concatenation of two words $x$ and $y$ is lower than or equal to the sum of palindromic length of $x$ and $y$.
\begin{lemma}
\label{r6yrt63yyt3t}
If $x,y\in \Alphabet^*$ then $\PL(xy)\leq \PL(x)+\PL(y)$.
\end{lemma}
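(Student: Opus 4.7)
The plan is to exhibit an explicit palindromic factorization of $xy$ whose number of pieces is at most $\PL(x) + \PL(y)$; since $\PL(xy)$ is the minimum over all such factorizations, the inequality follows.

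First I would dispose of the trivial cases. If $x = \epsilon$, then $xy = y$ and $\PL(xy) = \PL(y) = \PL(x) + \PL(y)$ because $\PL(\epsilon) = 0$ by definition; the case $y = \epsilon$ is symmetric. This leaves the main case $x, y \in \Alphabet^+$.

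For the main case, let $k = \PL(x)$ and $\ell = \PL(y)$, and pick any minimal palindromic factorizations $(t_1, t_2, \dots, t_k) \in \MPF(x)$ and $(s_1, s_2, \dots, s_\ell) \in \MPF(y)$, which exist because $x, y$ are nonempty. By definition of $\MPF$, each $t_i$ and each $s_j$ is a nonempty palindrome, and $t_1 t_2 \cdots t_k = x$ and $s_1 s_2 \cdots s_\ell = y$. Concatenating, I get
\[
xy = t_1 t_2 \cdots t_k s_1 s_2 \cdots s_\ell,
\]
which is a decomposition of $xy$ into $k + \ell$ nonempty palindromes. Since $\PL(xy)$ is the minimum number of palindromes whose concatenation equals $xy$, this yields $\PL(xy) \leq k + \ell = \PL(x) + \PL(y)$.

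There is no real obstacle here: the only thing to watch is the bookkeeping around the empty word, since the definition of $\MPF$ is only given for words in $\Alphabet^+$, so the edge cases must be handled separately before invoking any minimal factorization. Once that is done, the inequality is immediate from the definition of palindromic length as a minimum.
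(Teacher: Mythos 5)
Your proof is correct and follows essentially the same route as the paper: dispose of the empty-word cases, then concatenate minimal palindromic factorizations of $x$ and $y$ to exhibit a factorization of $xy$ into $\PL(x)+\PL(y)$ palindromes, so the minimum is at most that. Nothing further is needed.
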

\begin{proof}
If $x=\epsilon$ or $y=\epsilon$ then obviously $\PL(xy)=\PL(x)+\PL(y)$. Hence suppose that $x,y\in \Alphabet^+$. 
Let $i=\PL(x)$ and $j=\PL(y)$. Let $(t_1,t_2,\dots,t_i)\in \MPF(x)$ and $(u_1,u_2,\dots,u_j)\in \MPF(y)$. Then $t_1t_2\dots t_iu_1u_2\dots u_j$ is a factorization of $xy$ into $i+j$ palindromes. Consequently $\PL(xy)\leq i+j=\PL(x)+\PL(y)$. This completes the proof.
\end{proof}

An another basic property of the palindromic length says that if \[(t_1,t_2,\dots, t_k)\in\MPF(w)\] is a minimal palindromic factorization of the word $w$ then the palindromic length of the factor $t_it_{i+1}\dots t_j$ is equal to $j-i+1$ for each $i,j\in \{1,2,\dots,k\}$ and $i\leq j$. 
\begin{lemma}
\label{r6nb2b2b2b2}
If $w\in \Alphabet^+$, $k=\PL(w)$, and $(t_1,t_2,\dots, t_k)\in \MPF(w)$ then for each $i,j\in \{1,2,\dots,k\}$ with $i\leq j$ we have that $\PL(t_it_{i+1}\dots t_j)=j-i+1$.
\end{lemma}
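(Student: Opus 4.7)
The plan is to prove the two inequalities $\PL(t_it_{i+1}\cdots t_j)\le j-i+1$ and $\PL(t_it_{i+1}\cdots t_j)\ge j-i+1$ separately, with the latter going by contradiction. The whole argument should be a short consequence of subadditivity (Lemma \ref{r6yrt63yyt3t}) together with the minimality hypothesis $k=\PL(w)$.

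For the upper bound, I would simply observe that $t_it_{i+1}\cdots t_j$ is by assumption already a concatenation of the $j-i+1$ palindromes $t_i,t_{i+1},\dots,t_j$, which are all nonempty, so directly from the definition of $\PL$ we get $\PL(t_it_{i+1}\cdots t_j)\le j-i+1$.

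For the lower bound, suppose for contradiction that $\PL(t_it_{i+1}\cdots t_j)\le j-i$. I would split $w$ as the concatenation of the three (possibly empty) blocks $x=t_1\cdots t_{i-1}$, $y=t_i\cdots t_j$, $z=t_{j+1}\cdots t_k$. Two applications of Lemma \ref{r6yrt63yyt3t} give
\[
\PL(w)=\PL(xyz)\le \PL(x)+\PL(y)+\PL(z).
\]
The outer blocks admit obvious factorizations into $i-1$ and $k-j$ palindromes respectively, so $\PL(x)\le i-1$ and $\PL(z)\le k-j$ (with the convention that $\PL(\epsilon)=0$, which covers the boundary cases $i=1$ and $j=k$). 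Combined with the assumed bound $\PL(y)\le j-i$, this yields $\PL(w)\le (i-1)+(j-i)+(k-j)=k-1$, contradicting $\PL(w)=k$. Hence $\PL(t_it_{i+1}\cdots t_j)\ge j-i+1$, and equality follows.

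There is no real obstacle here; the only point that requires a bit of care is the handling of the boundary cases $i=1$ and $j=k$, where the prefix block $x$ or suffix block $z$ is empty. These are absorbed cleanly by the convention $\PL(\epsilon)=0$, which was fixed in the introduction, and by noting that Lemma \ref{r6yrt63yyt3t} holds even when one of its arguments is empty (its proof explicitly treats that case).
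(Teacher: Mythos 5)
Your proof is correct and follows essentially the same route as the paper: the upper bound is immediate from the given factorization into $j-i+1$ palindromes, and the lower bound is obtained by contradiction via two applications of subadditivity (Lemma \ref{r6yrt63yyt3t}) to the blocks $t_1\cdots t_{i-1}$, $t_i\cdots t_j$, $t_{j+1}\cdots t_k$, yielding $\PL(w)\leq k-1$. Your explicit attention to the empty boundary blocks is a minor point of extra care that the paper leaves implicit.
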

\begin{proof}
Since the word $t_it_{i+1}\dots t_j$ is concatenated of $j-i+1$ palindromes it is clear that $\PL(t_it_{i+1}d\dots t_j)\leq j-i+1$. Suppose that $\PL(t_it_{i+1}d\dots t_j)=m<j-i+1$. It would follow from Lemma \ref{r6yrt63yyt3t} that \[\begin{split}\PL(t_1t_2\dots t_k)\leq \PL(t_1t_2\dots t_{i-1})+\PL(t_it_{i+1}d\dots t_j) + \PL(t_{j+1}t_{j+2}\dots t_k)\leq \\ i-1+m+k-j<i-1 + j-i+1+k-j=k\mbox{.}\end{split}\]
This is contradiction, since $\PL(t_1t_2\dots t_k)=k$. The lemma follows.
\end{proof}

The following result has been proved in \cite{10.1007/978-3-319-66396-8_19}. It says that if $x,y$ are words then the palindromic length of $y$ is the maximal absolute difference of palindromic lengths of $x$ and $xy$; i.e. $\lvert\PL(x)-\PL(xy)\rvert\leq \PL(y)$.

\begin{lemma}(see \cite[Lemma $6$]{10.1007/978-3-319-66396-8_19})
\label{je8eurikk34iu}
If $x,y\in \Alphabet^*$ then \begin{itemize}
\item
$\PL(y)\leq \PL(x)+\PL(xy)$ and
\item
$\PL(x)\leq \PL(y)+\PL(xy)$.
\end{itemize}
\end{lemma}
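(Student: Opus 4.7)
The plan is to prove both inequalities together by reducing the second to the first via reversal symmetry. Since $\PL(w)=\PL(w^R)$ for every finite word $w$ (reversing a palindromic factorization of $w$ gives one of $w^R$) and $(xy)^R=y^Rx^R$, the inequality $\PL(x)\leq\PL(y)+\PL(xy)$ for the pair $(x,y)$ is exactly the inequality $\PL(y')\leq\PL(x')+\PL(x'y')$ applied to the pair $(x',y')=(y^R,x^R)$. Hence it suffices to prove $\PL(y)\leq\PL(x)+\PL(xy)$ for all $x,y\in\Alphabet^*$.

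I would prove this by strong induction on $|xy|$. The base case and the degenerate cases $x=\epsilon$ or $y=\epsilon$ are immediate from $\PL(\epsilon)=0$. For the inductive step, let $k=\PL(xy)$ and fix a minimal palindromic factorization $(t_1,t_2,\dots,t_k)\in\MPF(xy)$. Let $i$ be the unique index with $|t_1\cdots t_{i-1}|<|x|\leq|t_1\cdots t_i|$, and write $x=t_1\cdots t_{i-1}p$, $y=qt_{i+1}\cdots t_k$, $t_i=pq$, with $p$ nonempty. If $q=\epsilon$ then $y=t_{i+1}\cdots t_k$ decomposes into $k-i$ palindromes, giving $\PL(y)\leq k-i\leq\PL(x)+k$, so assume $q$ is also nonempty.

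The central step exploits that $t_i$ is a palindrome: $pq=t_i=t_i^R=q^Rp^R$. A direct letter-by-letter comparison of $pq$ and $q^Rp^R$ shows that if $|p|\leq|q|$ then $q=\hat q\,p^R$ for some palindrome $\hat q$, while if $|p|>|q|$ then $p=q^R\hat p$ for some palindrome $\hat p$. In the first case, $y=\hat q\,p^R\,t_{i+1}\cdots t_k$, so $\PL(y)\leq 1+\PL(p)+(k-i)$; applying the induction hypothesis to the pair $(t_1\cdots t_{i-1},\,p)$ whose concatenation is $x$ with $|x|<|xy|$, together with $\PL(t_1\cdots t_{i-1})\leq i-1$ from Lemma \ref{r6yrt63yyt3t}, yields $\PL(p)\leq(i-1)+\PL(x)$, and combining gives $\PL(y)\leq\PL(x)+k$. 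In the second case, since $\hat p$ is a palindrome we have $p^R=\hat p\,q$; applying the induction hypothesis to $(\hat p,\,q)$, whose concatenation $p^R$ has length $|p|\leq|x|<|xy|$, gives $\PL(q)\leq 1+\PL(p^R)=1+\PL(p)\leq i+\PL(x)$, whence $\PL(y)\leq\PL(q)+(k-i)\leq\PL(x)+k$.

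The main obstacle, and the heart of the argument, is handling the palindrome $t_i$ that straddles the boundary between $x$ and $y$. The identity $pq=q^Rp^R$ forces the shorter of $p$ and $q$ to appear as a reversed factor of the longer, leaving a palindromic remainder $\hat q$ or $\hat p$; this remainder contributes only one extra palindrome in the factorization of $y$, which is precisely the slack needed for the bookkeeping to balance against the inductive bound $\PL(p)\leq(i-1)+\PL(x)$. The induction must be strong induction on the total length $|xy|$, since the recursive calls produce strictly shorter words but with no a priori control over their palindromic lengths.
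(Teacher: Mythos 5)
Your proof is correct. Note, however, that the paper itself does not prove this lemma at all: it is imported verbatim as Lemma 6 of the cited reference \cite{10.1007/978-3-319-66396-8_19} (Saarela), so there is no in-paper argument to compare against. What you have written is a valid self-contained derivation. The reversal reduction $\PL(w)=\PL(w^R)$ correctly collapses the two inequalities into one. In the inductive step, the decomposition $x=t_1\cdots t_{i-1}p$, $y=qt_{i+1}\cdots t_k$ with $t_i=pq$ a palindrome is sound, and the key identity $pq=q^Rp^R$ does force $q=\hat q\,p^R$ (resp.\ $p=q^R\hat p$) with $\hat q$ (resp.\ $\hat p$) a palindrome, as one sees by comparing prefixes of length $\min\{|p|,|q|\}$ and then cancelling $p$ and $p^R$ (resp.\ $q^R$ and $q$) from both ends of the palindrome $t_i$. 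Both recursive calls are legitimate: the concatenations $x$ and $p^R$ have length at most $|x|<|xy|$ because $y\neq\epsilon$ in the nondegenerate case, so the strong induction on $|xy|$ is well founded, and the arithmetic $\PL(y)\leq 1+\PL(p)+(k-i)\leq \PL(x)+k$ (and its analogue in the second case) checks out. The only stylistic remark is that the bound $\PL(t_1\cdots t_{i-1})\leq i-1$ you invoke is the iterated form of Lemma \ref{r6yrt63yyt3t}; this is immediate but worth stating. Your argument is essentially the standard proof of this subadditivity-type inequality and could serve as a proof of the cited result within the paper.
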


The immediate corollary of Lemma \ref{je8eurikk34iu} is that if $x$ is a word and $y$ is a palindrome then the absolute difference of palindromic lengths of $x$ and $xy$ is at most $1$.
\begin{corollary}
\label{s8sjhhu99e}
If $x,y\in \Alphabet^*$ and $y\in \Pal$ then $\lvert\PL(xy)-\PL(x)\rvert\leq 1$.
\end{corollary}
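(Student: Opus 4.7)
The plan is to handle the trivial case $y = \epsilon$ separately and then, for $y \in \Pal^+$, to exploit the fact that a nonempty palindrome has palindromic length exactly $1$, which reduces the claim to a direct instantiation of the two inequalities in Lemma \ref{je8eurikk34iu} combined with the subadditivity Lemma \ref{r6yrt63yyt3t}.

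First I would dispose of the case $y = \epsilon$. Here $xy = x$, so $\PL(xy) - \PL(x) = 0$ and the inequality holds trivially. For the remainder assume $y \in \Pal^+$, so that $y$ is itself a palindrome of length at least one and $\PL(y) = 1$ by the definition of palindromic length.

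Next I would establish the two one-sided bounds. For the upper direction, Lemma \ref{r6yrt63yyt3t} gives $\PL(xy) \leq \PL(x) + \PL(y) = \PL(x) + 1$, hence $\PL(xy) - \PL(x) \leq 1$. For the lower direction, the second inequality of Lemma \ref{je8eurikk34iu} yields $\PL(x) \leq \PL(y) + \PL(xy) = 1 + \PL(xy)$, hence $\PL(x) - \PL(xy) \leq 1$. Combining the two bounds produces $\lvert \PL(xy) - \PL(x) \rvert \leq 1$, which is the desired conclusion.

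There is essentially no obstacle here: the whole content of the corollary is that $\PL(y) \leq 1$ when $y$ is a palindrome, and the result then drops out by plugging into Lemma \ref{je8eurikk34iu} (and Lemma \ref{r6yrt63yyt3t} for the opposite inequality). The only point worth noting explicitly is that one must allow $y = \epsilon$, which is permitted by the statement since $x, y \in \Alphabet^*$; this is why a brief separate case at the start of the proof is appropriate before invoking $\PL(y) = 1$.
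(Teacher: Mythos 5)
Your proof is correct and follows essentially the same route as the paper: reduce to $\PL(y)\leq 1$ for a palindrome $y$ and plug into the triangle inequalities. If anything, yours is slightly more careful than the paper's one-line argument, since you explicitly note that the bound $\PL(xy)-\PL(x)\leq 1$ comes from the subadditivity in Lemma \ref{r6yrt63yyt3t} rather than from Lemma \ref{je8eurikk34iu} alone.
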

\begin{proof}
It is enough to consider $y$ in Lemma \ref{je8eurikk34iu} to be a palindrome. Thus we have $\PL(y)=1$ if $y\not=\epsilon$ or $\PL(y)=0$ if $y=\epsilon$. The corollary follows.
\end{proof}

The next simple Corollary of Lemma \ref{je8eurikk34iu} says that if $x,y$ are words such that $xy$ is a palindrome then the absolute difference in palindromic lengths of $x$ and $y$ is at most $1$.
\begin{corollary}
\label{t8e9kffie8i}
If $x,y\in \Alphabet^*$ and $xy\in \Pal$ then $\lvert \PL(x)-\PL(y)\rvert\leq 1$.
\end{corollary}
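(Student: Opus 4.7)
The plan is to observe that the hypothesis gives a direct upper bound on $\PL(xy)$, and then to apply Lemma \ref{je8eurikk34iu} in both of its forms simultaneously. Indeed, since $xy\in \Pal$, either $xy=\epsilon$ (in which case $\PL(xy)=0$) or $xy$ is a single nonempty palindrome (in which case $\PL(xy)=1$); in every case we have the uniform bound $\PL(xy)\leq 1$.

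Next I would invoke Lemma \ref{je8eurikk34iu}, which applies to \emph{any} pair $x,y\in \Alphabet^*$ and furnishes two inequalities, namely $\PL(y)\leq \PL(x)+\PL(xy)$ and $\PL(x)\leq \PL(y)+\PL(xy)$. Rearranging these gives $\PL(x)-\PL(y)\leq \PL(xy)$ and $\PL(y)-\PL(x)\leq \PL(xy)$, which together are just the statement $\lvert\PL(x)-\PL(y)\rvert\leq \PL(xy)$. Combining this with the bound $\PL(xy)\leq 1$ from the previous paragraph yields $\lvert\PL(x)-\PL(y)\rvert\leq 1$, which is the claim.

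There is no real obstacle here: this is a direct two-line corollary and no auxiliary constructions are needed. The only subtlety worth a sentence in the write-up is handling the degenerate case $xy=\epsilon$, which forces $x=y=\epsilon$ and hence $\PL(x)=\PL(y)=0$, so the inequality holds trivially; this is already subsumed by the uniform bound $\PL(xy)\leq 1$, so a single unified argument as above suffices without case analysis.
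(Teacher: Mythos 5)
Your proof is correct, but it takes a more direct route than the paper's. You apply Lemma \ref{je8eurikk34iu} to the pair $(x,y)$ itself: its two inequalities rearrange to the symmetric bound $\lvert\PL(x)-\PL(y)\rvert\leq \PL(xy)$, valid for arbitrary $x,y\in\Alphabet^*$, and the hypothesis $xy\in\Pal$ then supplies $\PL(xy)\leq 1$. The paper instead argues via a decomposition: it first disposes of the case $x=y^R$, then (assuming without loss of generality $\vert x\vert>\vert y\vert$) writes $x=y^R\bar x$, observes that $\bar x$ must be a palindrome because $xy=y^R\bar xy\in\Pal$, and applies Corollary \ref{s8sjhhu99e} to $y^R$ and $\bar x$. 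Both arguments ultimately rest on the same cited Lemma \ref{je8eurikk34iu}, but yours avoids the case analysis and the auxiliary word $\bar x$ entirely, and as a by-product isolates the clean general inequality $\lvert\PL(x)-\PL(y)\rvert\leq \PL(xy)$; the paper's version instead exhibits the concrete palindromic structure of $x$ relative to $y$, which is in the spirit of the constructions used later in the main theorem. Your handling of the degenerate case $xy=\epsilon$ is also fine.
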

\begin{proof}
If $x=y^R$ then  $\PL(x)-\PL(y)=0$, because clearly $\PL(y)=\PL(y^R)$.
Suppose that $x\not=y^R$. It follows that $\vert x\vert \not=\vert y\vert$, since $xy\in \Pal$. Without loss of generality suppose that $\vert x\vert>\vert y\vert$. Let $\bar x$ be such that $x=y^R\bar x$. Then $xy=y^R\bar xy$. Thus $\bar x\in \Pal^+$. Corollary \ref{s8sjhhu99e} implies that $\vert\PL(y^R\bar x)-\PL(y)\rvert\leq 1$. The corollary follows.
\end{proof}

\section{Concatenation of periodic palindromes}

To simplify the notation of the next two lemmas and the theorem we define an auxiliary set $\Delta$.
Let $\Delta$ be the set of all $4$-tuples $(u,d,v,n)$ such that 
\begin{itemize}
\item $d\in \Pal^+$,
\item $v\in \Pal$,
\item $u\in \Suffix(d)\setminus\{\epsilon\}$,
\item $n\in \mathbb{N}$,
\item $\vert dv\vert=\MinPer(dvd)$, and
\item $n\geq 3\PL(u)$.
\end{itemize}

\begin{remark}
The set $\Delta$ contains all $4$-tuples $(u,v,d,n)$ such that $d$ is a nonempty palindrome, $v$ is a palindrome (possibly empty), $u$ is a nonempty suffix of $d$, $\vert dv\vert$ is the minimal period of the word $dvd$, and $n$ is a positive integer such that $n\geq 3\PL(u)$. It follows that $n\geq 3$, since $u$ is nonempty and thus $\PL(u)\geq 1$. 
\end{remark}

\begin{lemma}
\label{11scnfk9p}
If $(u,v,d,n)\in \Delta$, $r\in\Factor(u(vd)^n)$, and $\vert r\vert\geq 3\vert vd\vert$ then $dvd\in \Factor(r)$.
\end{lemma}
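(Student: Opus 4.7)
The plan is to exploit the fact that $u(vd)^n$ is a suffix of $W:=d(vd)^n=(dv)^n d$ (since $u$ is a suffix of $d$), so that every factor of $u(vd)^n$ may be regarded, at the same content but shifted position, as a factor of $W$. The problem then reduces to finding an occurrence of $dvd$ inside any factor of $W$ of length at least $3\lvert vd\rvert$.

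The key observation is that in $W=(dv)^n d$ the word $dvd$ occurs as a factor starting at each of the positions $q_k:=(k-1)\lvert vd\rvert+1$ for $k=1,2,\dots,n$, which form an arithmetic progression with common difference $\lvert vd\rvert$. If $r$ begins at position $i_0$ of $W$, then a $q_k$-occurrence of $dvd$ is contained in $r$ precisely when $q_k$ lies in the interval
\[ I:=[\,i_0,\;i_0+\lvert r\rvert-\lvert dvd\rvert\,]. \]
Using $\lvert dvd\rvert=\lvert vd\rvert+\lvert d\rvert$ together with $\lvert d\rvert\le\lvert vd\rvert$, a one-line computation gives $\lvert I\rvert\ge\lvert vd\rvert+1$. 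Any interval of $\lvert vd\rvert+1$ consecutive integers meets an arithmetic progression of step $\lvert vd\rvert$, so some $q_k$ (for an integer $k$) lies in $I$.

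The only bookkeeping step is to verify that this $k$ actually lies in $\{1,\dots,n\}$, so that $q_k$ is a genuine occurrence of $dvd$ inside $W$. The lower bound $q_k\ge i_0\ge 1$ gives $k\ge 1$, and $q_k+\lvert dvd\rvert-1\le i_0+\lvert r\rvert-1\le\lvert W\rvert=\lvert d\rvert+n\lvert vd\rvert$ gives $k\le n$. The corresponding occurrence of $dvd$ sits inside $r$, and since $r\in\Factor(u(vd)^n)$, it sits inside $u(vd)^n$ as well.

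I do not expect any real obstacle, since the argument is essentially pigeonhole on an arithmetic progression. It is worth noting that neither the minimal-period hypothesis $\lvert dv\rvert=\MinPer(dvd)$ nor the lower bound $n\ge 3\PL(u)$ plays any role here; only $\lvert r\rvert\ge 3\lvert vd\rvert$ together with $n\ge 1$ is used. Those finer hypotheses of $\Delta$ must therefore be reserved for the subsequent lemma and theorem.
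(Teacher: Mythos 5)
Your argument is correct, and it takes a genuinely different route from the paper's. The paper reduces to the prefix $p$ of $r$ of length exactly $3\vert vd\vert$, uses the period $\vert uv\bar u\vert=\vert vd\vert$ of $u(vd)^n$ to normalize the starting position of $p$ into the first period block, and then splits into three cases according to whether that position lands in $u$, in $v$, or in $\bar u$ (where $d=\bar u u$), exhibiting $dvd$ explicitly inside $p$ in each case. You instead embed $u(vd)^n$ as a suffix of $W=(dv)^n d$, observe that $dvd$ occurs at the $n$ positions $q_k=(k-1)\vert vd\vert+1$, and run a pigeonhole count: the window of admissible starting positions inside the occurrence of $r$ has length $\vert r\vert-\vert dvd\vert+1\geq\vert vd\vert+1$, hence meets the progression of step $\vert vd\vert$, and the boundary checks $1\leq k\leq n$ go through. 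This avoids the case analysis entirely and makes transparent exactly where the bound $3\vert vd\vert$ is consumed ($\vert dvd\vert\leq 2\vert vd\vert$ plus one full period of slack); the paper's version, by contrast, yields the explicit shape of $p$ in each case (e.g.\ $p=tvdvdv\bar t$), though that extra structural information is not actually used downstream --- Lemma \ref{rhujdu333dhj} only needs $dvd\in\Factor(t_j)$. Your closing observation is also accurate: neither $\vert dv\vert=\MinPer(dvd)$ nor $n\geq 3\PL(u)$ enters this lemma (the paper's proof does not use them either); they are needed only in Lemma \ref{rhujdu333dhj} and the main theorem.
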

\begin{proof}
Let $\bar w=u(vd)^n$, let $p\in \Prefix(r)$ with $\vert p\vert=3\vert vd\vert$, and let $\bar i,\bar j\in \{1,2,\dots,\vert \bar w\vert\}$ be such that $p=\bar w[\bar i,\bar j]$. Let $\bar u\in \Prefix(d)$ be such that $d=\bar uu$. Note that $\vert uv\bar u\vert=\vert vd\vert$ and thus $(uv\bar u, \beta)\in \Period(\bar w)$, where $\beta=\frac{\vert \bar w\vert}{\vert uv\bar u\vert}>1$. 

Let $k\in \mathbb{N}_0$ and $w\in \Suffix(\bar w)$ be such that $\bar w=(uv\bar u)^k w$, $\bar i>\vert (uv\bar u)^k\vert$, and $\bar i\leq \vert (uv\bar u)^{k+1}\vert$. Obviously such $k$ and $w$ exist. Let $i=\bar i-k\vert uv\bar u\vert$ and $j=\bar j-k\vert uv\bar u\vert$. It is easy to see that $p=w[i,j]$. 

We distinguish:
\begin{itemize}
\item If $i\in \{1,2,\dots,\vert u\vert\}$ then $p=tvdvdv\bar t$ for some $t\in\Suffix(u)$ and for $\bar t$ such that $d=\bar tt$.
\item If $i\in \{\vert u\vert+1, \vert u\vert+2, \dots, \vert uv\vert\}$ then $p=tdvdvd\bar t$ for some $t\in\Suffix(v)$ and for $\bar t$ such that $v=\bar tt$.
\item If $i\in \{\vert uv\vert +1, \vert uv\vert +2, \dots, \vert uv\vert +\vert \bar u\vert\}$ then $p=tvdvdv\bar t$ for some $t\in\Suffix(d)$ and for $\bar t$ such that $d=\bar tt$.
\end{itemize}
In all three cases one can see that $dvd\in \Factor(p)$. It is easy to see that if $dvd\in \Factor(p)$ then $dvd\in \Factor(r)$ for each $r\in \Factor(w)$ with $p\in \Prefix(r)$. The lemma follows.
\end{proof}
\begin{remark}
Note in the previous proof that with the condition $\vert r\vert\geq \vert (vd)^2\vert$ it would be possible that $dvd\not\in \Factor(p)$. In the cases $1$ and $3$ we would have $p=tvdv\bar t$. That is why the condition $\vert r\vert\geq \vert (vd)^3\vert$ necessary is. For this reason in the definition of $\Delta$ we state that $n\geq 3\PL(u)$.
\end{remark}
The next lemma shows that if $(u,v,d,n)\in\Delta$, $k$ is the palindromic length of $u$, and $(t_1,t_2,\dots,t_k)\in\MPF(u(vd)^n)$ is a minimal palindromic factorization of $u(vd)^n$ then there is $j\in\{1,2,\dots,k\}$ such that $t_j$ is a palindrome having the factor $dvd$ in the ``center'' of $t_j$; formally $t_j=pd(vd)^{\gamma}p^R$ for some positive integer $\gamma$ and for some proper suffix $p$ of $dv$.
\begin{lemma}
\label{rhujdu333dhj}
If $(u,v,d,n)\in \Delta$, $w=u(vd)^n$, $k=\PL(w)$, and \[(t_1,t_2,\dots, t_k)\in \MPF(w)\] then there are $j\in \{1,2\dots, k\}$, $p\in \Suffix(dv)\setminus\{dv\}$, and $\gamma\in \mathbb{N}$ such that $t_j=pd(vd)^{\gamma}p^R$.
\end{lemma}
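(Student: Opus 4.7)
The plan is to locate an index $j$ such that $t_j$ contains $dvd$ as a factor, and then to use the palindromic symmetry of $t_j$ together with the minimal-period hypothesis $\vert vd\vert=\MinPer(dvd)$ to derive the form $t_j = p d (vd)^\gamma p^R$.

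For the location step, I would first observe that $(vd)^n = v \cdot d(vd)^{n-1}$ is a concatenation of two palindromes (since $d(vd)^{n-1} = (dv)^{n-1}d$ is a palindrome), hence $\PL((vd)^n) \leq 2$ and $k \leq \PL(u)+2$ by Lemma \ref{r6yrt63yyt3t}. The word $w$ contains exactly $n-1$ occurrences of $dvd$, located at positions $\vert uv\vert+1+m\vert vd\vert$ for $m = 0, \ldots, n-2$; consecutive occurrences overlap in exactly $\vert d\vert$ characters and any two non-consecutive occurrences are disjoint (using $2\vert vd\vert > \vert dvd\vert$). Consequently each of the $k-1$ internal boundaries of the factorization can split at most two $dvd$-occurrences, so at least $n-1-2(k-1) = n-2k+1$ occurrences lie entirely inside some $t_j$. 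The inequality $n \geq 3\PL(u) \geq 2\PL(u)+4 \geq 2k$ secures the existence of such a $t_j$ whenever $\PL(u) \geq 4$; the remaining small cases $\PL(u) \in \{1,2,3\}$ require a separate direct case analysis.

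Having fixed such a $t_j$, let $s \geq 0$ be minimal with $dvd = t_j[s+1..s+\vert dvd\vert]$, and set $p = t_j[1..s]$. Palindromic symmetry of $t_j$, combined with $dvd$ being a palindrome, reflects this occurrence to another $dvd$-occurrence in $t_j$ at position $\vert t_j\vert-s-\vert dvd\vert+1$. Writing $d = \bar u u$ so that $w = (uv\bar u)^n u$ has period $\vert vd\vert$ throughout, the hypothesis $\vert vd\vert = \MinPer(dvd)$ forces any two occurrences of $dvd$ in $t_j$ to be aligned modulo $\vert vd\vert$; together with minimality of $s$, this places the reflected occurrence at position $s+1+(\gamma-1)\vert vd\vert$ for some $\gamma \geq 1$, and then expanding $t_j$ via its $\vert vd\vert$-periodicity yields $t_j = p \cdot d \cdot (vd)^\gamma \cdot p^R$. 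The membership $p \in \Suffix(dv)$ holds because the $s$ characters of $w$ immediately preceding any $dvd$-occurrence spell out, within a single period, a suffix of $dv$ (and $uv \in \Suffix(dv)$, which follows from $u \in \Suffix(d)$, covers the edge case when the $dvd$-occurrence in question is the leftmost one in $w$). The strict inequality $\vert p\vert = s < \vert vd\vert$ follows from minimality of $s$: otherwise, the $\vert vd\vert$-periodicity would place another $dvd$-occurrence at position $s+1-\vert vd\vert$ inside $t_j$, contradicting the choice of $s$.

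The main obstacle is the counting step above: the pigeonhole bound $n-2k+1 \geq 1$ succeeds cleanly only for $\PL(u) \geq 4$, and dealing with the three small cases $\PL(u) \in \{1,2,3\}$ requires a more delicate structural analysis of the minimal palindromic factorizations of $u(vd)^n$, possibly exploiting tighter case-dependent bounds on $k$ or the specific palindromic structure near the initial $u$-portion of $w$.
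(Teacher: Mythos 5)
Your second half---pinning down $t_j=pd(vd)^{\gamma}p^R$ once a block of the factorization containing $dvd$ has been located---is essentially sound and close in spirit to the paper's argument. The paper extends the occurrence maximally to write $t_j=p_1d(vd)^{\gamma}p_2$ with $p_1\in\Suffix(dv)\setminus\{dv\}$ and $p_2\in\Prefix(vd)\setminus\{vd\}$, and rules out $p_1\neq p_2^R$ by applying Lemma \ref{id8ieubmzmfj} to the palindromes $p_1d(vd)^{\gamma}p_1^R$ and $t_j$, which would produce a period $\vert p_1\vert-\vert p_2\vert<\vert dv\vert$ of a word containing $dvd$, contradicting $\MinPer(dvd)=\vert dv\vert$. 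Your ``occurrences of $dvd$ are aligned modulo $\vert vd\vert$'' step is an equivalent use of the same hypothesis, so no complaint there (two minor slips: $w$ has $n$ rather than $n-1$ occurrences of $dvd$ when $u=d$, and $2\vert vd\vert>\vert dvd\vert$ fails when $v=\epsilon$, though disjointness of non-consecutive occurrences survives as tangency).

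The genuine gap is exactly where you flag it: the location step, and the missing idea is Lemma \ref{11scnfk9p}, which you never invoke. The paper does not count occurrences of $dvd$ against factorization boundaries at all; it counts lengths. If every $\vert t_i\vert<3\vert vd\vert$ then $\vert w\vert<3k\vert vd\vert$, whereas $\vert w\vert>n\vert vd\vert\geq 3k\vert vd\vert$, so some $\vert t_j\vert\geq 3\vert vd\vert$; Lemma \ref{11scnfk9p} then says that \emph{any} factor of $u(vd)^n$ of length at least $3\vert vd\vert$ contains $dvd$, wherever it sits. This averaging argument needs only $n\geq 3k$ and makes your small cases $\PL(u)\in\{1,2,3\}$ disappear, whereas your pigeonhole needs $n\geq 2k$ certified through the weaker bound $k\leq\PL(u)+2$ and genuinely breaks down there---those cases are not degenerate, and your proposal leaves them unproved. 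One caveat in your favour: the paper's ``$n\geq 3k$'' silently reads $k=\PL(w)\leq\PL(u)$ into the hypothesis $n\geq 3\PL(u)$ of $\Delta$, which is not literally guaranteed (only $\PL(w)\leq\PL(u)+2$ is); this is harmless where the lemma is used, since the theorem's conclusion is immediate when $\PL(w)>\PL(u)$, but it means that neither your count nor the plain averaging closes the lemma as literally stated in that regime. Still, adopting the length-based pigeonhole plus Lemma \ref{11scnfk9p} is what you need to eliminate the obstacle you describe.
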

\begin{proof}
Suppose that $\vert t_i\vert<3\vert vd\vert$ for each $i\in \{1,2,\dots,k\}$. It follows that \[\vert t_1t_2\dots t_k\vert< 3k\vert vd \vert\mbox{.}\] Since $u(vd)^n=t_1t_2\dots t_k$ and $n\geq 3k\geq 3$ it is a contradiction. 
It follows that there is $j$ such that $\vert t_j\vert \geq\vert (vd)^3\vert$. Lemma \ref{11scnfk9p} asserts that $dvd\in \Factor(t_j)$. Then clearly there are $\gamma\in \mathbb{N}$ and $p_1,p_2\in \Alphabet^*$ such that $p_1\in \Suffix(dv)\setminus\{dv\}$, $p_2\in \Prefix(vd)\setminus\{vd\}$, and $t_j=p_1d(vd)^{\gamma}p_2$.

To get a contradiction suppose that $p_1\not=p_2^R$.
Without loss of generality suppose that $\vert p_1\vert >\vert p_2\vert$. It follows that $p_2\in \Prefix(p_1^R)$. Obviously $p_1d(vd)^{\gamma}p_1^R\in \Pal$. Thus we have two palindromes $p_1d(vd)^{\gamma}p_1^R$ and $p_1d(vd)^{\gamma}p_2$. Lemma \ref{id8ieubmzmfj} implies that $p_1d(vd)^{\gamma}p_1^R$ is periodic with a period \[\delta=\vert p_1d(vd)^{\gamma}p_1^R\vert-\vert p_1d(vd)^{\gamma}p_2\vert =\vert p_1\vert -\vert p_2\vert\mbox{.}\]
Clearly $\delta<\vert dv\vert$. This is a contradiction to the condition $\vert dv\vert =\MinPer(dvd)$, see Definition of $\Delta$. We conclude that $p_1=p_2^R$.
The lemma follows.
\end{proof}

The main theorem of the article says that if $v,d$ are palindromes, $d$ is nonempty, $u$ is a nonempty suffix of $d$, $k=\PL(u)$, $\vert dv\vert$ is the minimal period of $dvd$, and $n$ is a positive integer such that $n\geq 3k$ then the palindromic length of the word $u(vd)^n$ is bigger than or equals to the palindromic length of $u$.
\begin{theorem}
If $(u,v,d,n)\in \Delta$, $k=\PL(u)$, and $w=u(vd)^n$ then $\PL(w)\geq k$.
\end{theorem}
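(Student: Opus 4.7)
The plan is to argue by contradiction. Assume $\PL(w) = k' < k = \PL(u)$ and fix a minimal palindromic factorization $(t_1, \ldots, t_{k'}) \in \MPF(w)$. The hypothesis $n \geq 3k$ (from $(u,v,d,n) \in \Delta$) combined with the contradictory assumption $k' < k$ gives $n \geq 3k > 3k' = 3\PL(w)$, which is exactly the inequality used inside the proof of Lemma~\ref{rhujdu333dhj}. That lemma therefore supplies an index $j$, a proper suffix $p$ of $dv$, and $\gamma \in \mathbb{N}$ with $t_j = p d (vd)^\gamma p^R$.

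The central technical step is to shrink $t_j$ using the periodicity of $w$. For each $\gamma' \in \{0, 1, \ldots, \gamma\}$, the word $p d (vd)^{\gamma'} p^R$ is again a palindrome, because $d(vd)^{\gamma'} = (dv)^{\gamma'} d$ is palindromic (using that $d$ and $v$ are palindromes). Substituting the shorter palindrome $p d (vd)^{\gamma - 1} p^R$ for $t_j$ in the factorization produces $k'$ palindromes whose concatenation has length $|w| - |vd|$. Because $|dv| = \MinPer(dvd)$ and $d$ is a palindrome, the extended word $d(vd)^n$ (which contains $w$ as a suffix via $d = \bar u u$) has period $|vd|$ throughout, so removing $|vd|$ consecutive characters from the interior of the periodic part yields $d(vd)^{n-1}$. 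Consequently the new concatenation equals $u(vd)^{n-1}$, and $\PL(u(vd)^{n-1}) \leq k'$. Iterating, the same reduction is valid at step $i$ as long as $n - i \geq 3\PL(u(vd)^{n-i})$, a condition guaranteed by $n - i \geq 3k'$; so we obtain $\PL(u(vd)^m) \leq k'$ for every $m \in \{3k', 3k'+1, \ldots, n\}$.

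To finish, I take a minimal palindromic factorization $(s_1, \ldots, s_{k''}) \in \MPF(u(vd)^{3k'})$ with $k'' \leq k'$ and analyze its last palindrome $s_{k''}$, which is a palindromic suffix of $u(vd)^{3k'}$. Using $|dv| = \MinPer(dvd)$ and that $u$ is a nonempty suffix of $d$, I aim to show $s_{k''}$ must be of the form $d(vd)^i$ for some $i \in \{0, 1, \ldots, 3k'-1\}$. Granting this, $s_1 \cdots s_{k''-1}$ equals the prefix $u(vd)^{3k'-i-1} v = u \cdot v(dv)^{3k'-i-1}$ (since $u(vd)^{3k'-i-1} v \cdot d(vd)^i = u(vd)^{3k'}$), so $\PL(u \cdot v(dv)^{3k'-i-1}) \leq k'' - 1 \leq k' - 1$. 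Because $v(dv)^{3k'-i-1}$ is a palindrome, Corollary~\ref{s8sjhhu99e} gives $\PL(u) \leq k'$, contradicting $\PL(u) = k > k'$.

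The principal obstacle is justifying the structural claim that $s_{k''}$ has the form $d(vd)^i$: proper palindromic suffixes of $d$ shorter than $d$ (which exist whenever $d$ has a nontrivial palindromic proper prefix) are palindromic suffixes of $u(vd)^{3k'}$ not of this form, and ruling them out as the last palindrome of a minimal factorization is delicate. I expect to handle this by reapplying Lemma~\ref{rhujdu333dhj} to the factorization $(s_1, \ldots, s_{k''})$ to extract another long palindrome $s_{j^*} = p' d (vd)^{\gamma^*} (p')^R$, and then combining its position with the minimum-period constraint $|dv| = \MinPer(dvd)$ to pin down the form of $s_{k''}$.
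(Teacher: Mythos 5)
There is a genuine gap, and you have correctly located it yourself: the structural claim that the last palindrome $s_{k''}$ of a minimal factorization of $u(vd)^{3k'}$ has the form $d(vd)^i$ is not proved, and it is not true in general. Minimality of a palindromic factorization forces \emph{some} part to be long (that is exactly the pigeonhole in Lemma \ref{rhujdu333dhj}), but it places no constraint on the \emph{last} part, which can perfectly well be a single letter or any short palindromic suffix of $d$. Your fallback plan --- reapply Lemma \ref{rhujdu333dhj} to extract a long interior palindrome $s_{j^*}$ --- gives information about some index $j^*$, not about $k''$, and the minimum-period condition $|dv|=\MinPer(dvd)$ only restricts \emph{long} palindromic suffixes (via Lemma \ref{id8ieubmzmfj}); it says nothing about suffixes shorter than $|vd|$. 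So the final reduction to $\PL(u)\leq k'$ does not go through, and the earlier easy repair (using $\PL((vd)^{3k'})\leq 2$ together with Lemma \ref{je8eurikk34iu}) only yields $k\leq k'+2$, which is not a contradiction. To your credit, the first half of your argument is sound and is not in the paper: since $u(vd)^n$ has period $|vd|$ (as $u$ is a suffix of $d$), deleting the block $vd$ from the middle of $t_j=pd(vd)^{\gamma}p^R$ turns the factorization into one of $u(vd)^{n-1}$ with the same number of parts, and this iterates while $n-i\geq 3\PL(u(vd)^{n-i})$. That is a correct and potentially useful monotonicity observation, but it does not by itself close the argument.

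The paper takes a different and more direct route that avoids controlling the endpoints of the factorization altogether. It writes $w=a\,t_j\,b$ with $a=t_1\cdots t_{j-1}$, $b=t_{j+1}\cdots t_k$, uses Lemma \ref{r6nb2b2b2b2} to get $\PL(w)=\PL(a)+1+\PL(b)$, and then shows $\PL(a)+\PL(b)\geq\PL(u)-1$ by a three-case analysis on the position of the boundary of $a$ relative to $u$ and of $p$ relative to $v$. In each case the periodic and palindromic structure lets one pair $b$ (or the part $\bar a$ of $a$ after $u$) with a reversal across $t_j$, so that Corollaries \ref{s8sjhhu99e} and \ref{t8e9kffie8i} and Lemma \ref{je8eurikk34iu} give $|\PL(\bar a)-\PL(b)|\leq 1$ and $\PL(a)\geq\PL(u)-\PL(\bar a)$ (or, in the first case, $\PL(b)\geq\PL(u_2)-1$ and $\PL(a)+\PL(u_2)\geq\PL(u)$). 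If you want to salvage your approach, you would need to replace the claim about $s_{k''}$ with an argument of this kind that works for an arbitrary split point; at that stage you have essentially reconstructed the paper's proof, and the period-removal iteration becomes unnecessary.
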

\begin{proof}
Let $(t_1,t_2,\dots, t_k)\in \MPF(w)$. Lemma \ref{rhujdu333dhj} asserts that there are $j\in \{1,2\dots, k\}$, $p\in \Suffix(dv)\setminus\{dv\}$, and $\gamma\in \mathbb{N}$ such that $t_j=pd(vd)^{\gamma}p^R$.

Let $a\in \Prefix(w)$ and $b\in \Suffix(w)$ be such that
$w=at_jb$. Realize that $a=t_1t_2\dots t_{j-1}$ and $b=t_{j+1}t_{j+2}\dots t_k$. Note that $a$ or $b$ can be the empty word; then $j=1$ or $j=k$ respectively.
Lemma \ref{r6nb2b2b2b2} implies that \begin{equation}\label{rt6dydtyee22}\begin{split}\PL(w)= \PL(t_1t_2\dots t_{j-1})+\PL(t_j)+\PL(t_{j+1}t_{j+2}\dots t_k)= \\ \PL(a)+\PL(t_j)+\PL(b)\mbox{.}\end{split}\end{equation}

We distinguish three distinct cases.
\begin{enumerate}
\item \label{rutikj887e8}
$u\not\in \Prefix(a)$: This case is depicted in Table \ref{Fig_oiwieofk}. Let $u_2\in \Suffix(u)$ be such that $u=au_2$. Let $\bar p\in \Suffix(d)$ be such that $\bar pu_2=d$. It follows  that $u_2^R\bar p^R=d$ and $p^R\bar p^R=vd$. 

Then we have that $u_2^Rb=u_2^R\bar p^R(vd)^{\beta}=d(vd)^{\beta}\in \Pal^+$ for some $\beta\in \mathbb{N}_0$. Hence $\PL(u_2^R\bar p^R(vd)^{\beta})=1$. In consequence $\PL(u_2)\geq \PL(b)-1$ and \begin{equation}\label{tthnb87r8}\PL(b)\geq\PL(u_2)-1\mbox{,}\end{equation} since $\PL(u_2^R)=\PL(u_2)$ and $u_2^Rb\in \Pal^+$; see Corollary \ref{t8e9kffie8i}. 

Lemma \ref{r6yrt63yyt3t} implies that \begin{equation}\label{55tn55f8}\PL(a)+\PL(u_2)\geq \PL(u)\mbox{.}\end{equation}

From (\ref{rt6dydtyee22}), (\ref{tthnb87r8}), and (\ref{55tn55f8}) we have that \[\begin{split}\PL(w)=\PL(a)+\PL(t_j)+\PL(b)\geq \PL(a)+1+\PL(u_2)-1\geq \PL(u)\mbox{.}\end{split}\] 

\begin{table}[h]
\centering
\begin{tabular}{|c|c|l|l|c|l|c|l|}
\hline
$a$   & \multicolumn{5}{c|}{$t_j$}                                                                                                 & \multicolumn{2}{c|}{$b$}                \\ \hline
$a$      & \multicolumn{2}{c|}{$p$}                              &  $d(vd)^{\gamma}$         & \multicolumn{2}{c|}{$p^R$} & \multicolumn{1}{l|}{$\bar p^R$} &  $(vd)^{\beta}$               \\ \hline
$a$ & \multicolumn{1}{l|}{$u_2$} & \multicolumn{1}{c|}{$v$} & \multicolumn{1}{c|}{$d(vd)^{\gamma}$} & $v$        & $u_2^R$       & $\bar p^R$            & $(vd)^{\beta}$ \\ \hline
\multicolumn{2}{|c|}{$u$}          & \multicolumn{2}{c|}{$(vd)^{\gamma+1}$}                               & \multicolumn{1}{c|}{$v$}       & \multicolumn{2}{c|}{$d$}                 &    $(vd)^{\beta}$             \\ \hline
\end{tabular}
\caption{Case \ref{rutikj887e8}: The structure of the word $w$ with $u\not\in \Prefix(a)$.}
\label{Fig_oiwieofk}
\end{table}
\item 
\label{t8id22nbx}
$u\in \Prefix(a)$ and $p\in \Suffix(v)$: This case is depicted in Table \ref{Fig_38hhyegy2}.
Let $\bar p\in \Prefix(v)$ be such that
$\bar pp=v$. Note that if $p=v$ then $\bar p=\epsilon$, and if $p=\epsilon$ then $\bar p=v$. 
It is easy to verify that $b=\bar p^Rd(vd)^{\beta}$ for some $\beta\in \mathbb{N}_0$ and $a=u(vd)^{\alpha}\bar p$ for some $\alpha\in \mathbb{N}_0$. 

Let $\bar a$ be such that $a=u\bar a$.
We have that $\bar a=(vd)^{\alpha}\bar p$ and $b=\bar p^Rd(vd)^{\beta}$. It follows that either $\bar a=b^Rd(vd)^{\delta}$ or $b=\bar a^Rd(vd)^{\delta}$ for some $\delta\in \mathbb{N}_0$.

Since $d(vd)^{\delta}\in \Pal$, Corollary \ref{s8sjhhu99e} implies that \begin{equation}\label{rubbnxb2bb}\lvert\PL(\bar a)-\PL(b)\rvert\leq 1\mbox{.}\end{equation} It follows from Lemma \ref{je8eurikk34iu} that $\PL(a)+\PL(\bar a)\geq \PL(u)$ and in consequence \begin{equation}\label{6rytytf88d}\PL(a)\geq \PL(u)-\PL(\bar a)\mbox{.}\end{equation}
From (\ref{rt6dydtyee22}), (\ref{rubbnxb2bb}), and (\ref{6rytytf88d}) we have that
\[\begin{split}\PL(w)=\PL(a)+\PL(t_j)+\PL(b)\geq \PL(u)-\PL(\bar a) + 1 + \PL(b) \geq \\ \PL(u)-\PL(\bar a) + 1 + \PL(\bar a)-1=\PL(u)\mbox{.}\end{split}\]
\begin{table}[h]
\centering
\begin{tabular}{|c|l|l|c|l|c|c|l|}
\hline
\multicolumn{3}{|c|}{$a$}                                     & \multicolumn{3}{c|}{$t_j$}                                            & \multicolumn{2}{c|}{$b$}             \\ \hline
$u$                    & $(vd)^{\alpha}$ & $\bar p$           & $p$           & \multicolumn{1}{c|}{$d(vd)^{\gamma}$} & $p^R$         & $\bar p^R$         & $d(vd)^{\beta}$ \\ \hline
\multicolumn{1}{|l|}{} &                  \multicolumn{2}{c|}{$\bar a$} & \multicolumn{2}{c|}{}         & \multicolumn{2}{c|}{$v$} &                 \\ \hline
\end{tabular}
\caption{Case \ref{t8id22nbx}: The structure of the word $w$ with $u\in \Prefix(a)$ and $p\in \Suffix(v)$.}
\label{Fig_38hhyegy2}
\end{table}
\item \label{rru8ub121q} $u\in \Prefix(a)$ and $p\not \in \Suffix(v)$: This case is depicted in Table \ref{Fig_d9eikji344hb}. Since $p\in \Suffix(vd)\setminus\{vd\}$ and $p\not\in \Suffix(v)$ it follows that $p\in \Suffix(dv)\setminus(\Suffix(v)\cup\{dv\})$. 

Let $\bar p\in \Prefix(d)$ be such that $\bar pp=dv$ and consequently $p^R\bar p^R=vd$.
Then $a=u(vd)^{\alpha}\bar p$ for some $\alpha\in \mathbb{N}_0$  and $b=\bar p(vd)^{\beta}$ for some $\beta\in \mathbb{N}_0$.

Let $\bar a$ be such that $a=u\bar a$.
We have that $\bar a=v(dv)^{\alpha}\bar p$. It follows that either $\bar a=b^R(vd)^{\delta}v$ or $b=\bar a^R(vd)^{\delta}v$ for some $\delta\in \mathbb{N}_0$.

The rest of the proof of Case \ref{rru8ub121q} is analogue to Case \ref{t8id22nbx}:
Since $v(dv)^{\delta}\in \Pal$, Corollary \ref{s8sjhhu99e} implies that \begin{equation}\label{oou39mdhn3}\lvert\PL(\bar a)-\PL(b)\rvert\leq 1\mbox{.}\end{equation} It follows from Lemma \ref{je8eurikk34iu} that $\PL(a)+\PL(\bar a)\geq \PL(u)$ and in consequence \begin{equation}\label{irki999f8hb}\PL(a)\geq \PL(u)-\PL(\bar a)\mbox{.}\end{equation}
From (\ref{rt6dydtyee22}), (\ref{oou39mdhn3}), and (\ref{irki999f8hb}) we have that
\[\begin{split}\PL(w)=\PL(a)+\PL(t_j)+\PL(b)\geq \PL(u)-\PL(\bar a) + 1 + \PL(b) \geq \\ \PL(u)-\PL(\bar a) + 1 + \PL(\bar a)-1=\PL(u)\mbox{.}\end{split}\]
\begin{table}[h]
\centering
\begin{tabular}{|c|l|l|c|l|c|c|l|}
\hline
\multicolumn{3}{|c|}{$a$}                                     & \multicolumn{3}{c|}{$t_j$}                                            & \multicolumn{2}{c|}{$b$}             \\ \hline
$u$                    & $v(dv)^{\alpha}$ & $\bar p$           & $p$           & \multicolumn{1}{c|}{$d(vd)^{\gamma}$} & $p^R$         & $\bar p^R$         & $(vd)^{\beta}$ \\ \hline
\multicolumn{1}{|l|}{}                  & \multicolumn{2}{c|}{$\bar a$} &  \multicolumn{2}{c|}{}           & \multicolumn{2}{c|}{$vd$} &                 \\ \hline
\end{tabular}
\caption{Case \ref{rru8ub121q}: The structure of the word $w$ with $u\in \Prefix(a)$ and $p\not \in \Suffix(v)$.}
\label{Fig_d9eikji344hb}
\end{table}
\end{enumerate}
We proved for each case that $\PL(w)\geq \PL(u)$. Since obviously for each $u$ and each $p$ one of the three cases applies, this completes the proof.
\end{proof}


\section*{Acknowledgments}
This work was supported by the Grant Agency of the Czech Technical University in Prague, grant No. SGS20/183/OHK4/3T/14.

\bibliographystyle{siam}
\IfFileExists{biblio.bib}{\bibliography{biblio}}{\bibliography{../!bibliography/biblio}}

\end{document}